\pgfplotsset{compat=1.18}
\author{Florian Sextl}
\email{florian.sextl@tuwien.ac.at}
\affiliation{
    \institution{TU Wien}
    \department[0]{Institute of Logic and Computation}
    \department[1]{Research Unit for Formal Methods in Systems Engineering}
    \city{Vienna}
    \country{Austria}
}
\author{Adam Rogalewicz}
\email{rogalew@fit.vut.cz}
\affiliation{
    \institution{Brno University of Technology}
    \department{Faculty of Information Technology}
    \city{Brno}
    \country{Czechia}
}
\author{Tom\'{a}\v{s} Vojnar}
\email{vojnar@fi.muni.cz}
\affiliation{
    \institution{Masaryk University}
    \department{Faculty of Informatics}
    \city{Brno}
    \country{Czechia}
}
\affiliation{
    \institution{Brno University of Technology}
    \department{Faculty of Information Technology}
    \city{Brno}
    \country{Czechia}
}
\author{Florian Zuleger}
\email{florian.zuleger@tuwien.ac.at}
\affiliation{
    \institution{TU Wien}
    \department[0]{Institute of Logic and Computation}
    \department[1]{Research Unit for Formal Methods in Systems Engineering}
    \city{Vienna}
    \country{Austria}
}
\title{Compositional Shape Analysis with Shared Abduction and Biabductive Loop Acceleration (Extended Version)}
\keywords{shape analysis, biabduction}
\lstdefinestyle{mystyle}{
    commentstyle=\color{ACMPurple},
    keywordstyle=\color{ACMGreen},
    numberstyle=\tiny\color{black},
    stringstyle=\color{ACMPurple},
    basicstyle=\ttfamily\scriptsize,
    breakatwhitespace=false,
    breaklines=true,
    captionpos=b,
    keepspaces=true,
    numbers=none,
    numbersep=5pt,
    showspaces=false,
    showstringspaces=false,
    showtabs=false,
    tabsize=2,
    abovecaptionskip=-1pt,
    numberbychapter=false
}
\newcommand{\Null}{\mathit{NULL}}
\newcommand{\exit}[1]{\textsc{Exit}(#1)}
\newcommand{\mathdef}{\mathrel{\mathop:}=}
\newcommand{\effect}{\tau}
\newcommand{\expast}{\theta}
\newcommand{\soundw}{\triangleright}
\newcommand{\newname}{\text{Brush}}
\newcommand{\conf}{\mathit{Config}}
\newcommand{\AbsState}{Analysis State}
\newcommand{\absstate}{analysis state}
\newcommand{\noshapeeffect}{\mathit{rem}}
\newcommand{\broomv}{a361d01badf45c420b57158f2e6d738cb45d1dd9}
\newcommand{\inferv}{f93cb281edb33510d0a300f1e4c334c6f14d6d26}
\newcommand{\statesep}{\ \|\ }
\newcommand{\formsep}{:}
\newcommand{\oldsh}{\mathcal{P}}
\newcommand{\newsh}{\mathcal{Q}}
\newcommand{\highlight}[2]{\colorbox{#1}{$\displaystyle #2$}}
\newcommand{\ls}{\mathit{ls}}
\newcommand{\cond}{c}
\newcommand{\changed}{\chi}
\newcommand{\transfm}{\mu}
\newcommand{\textttt}[1]{\texttt{\scriptsize#1}}
\patchcmd\WF@putfigmaybe{\lower\intextsep}{}{}{\fail}%
\newcommand{\ana}{A}
\newcommand{\abs}{\alpha}
\renewcommand{\fnum@figure}{Figure \thefigure}
\Crefname{algorithm}{procedure}{procedures}
\Crefname{algorithm}{Procedure}{Procedures}
\Crefname{remark}{remark}{remarks}
\Crefname{remark}{Remark}{Remarks}
\Crefname{figure}{figure}{figures}
\Crefname{figure}{Figure}{Figures}
\begin{abstract}
    Biabduction-based shape analysis is a compositional verification and analysis
    technique that can prove memory safety in the presence of complex, linked data
    structures.
    Despite its usefulness, several open problems persist for this kind of analysis;
    two of which we address in this paper.
    On the one hand, the original analysis is path-sensitive but cannot
    combine safety requirements for related branches.
    This causes the analysis to require additional soundness checks and decreases 
    the analysis' precision.
    We extend the underlying symbolic execution and propose a framework for 
    \emph{shared abduction} where a common pre-condition is maintained for related 
    computation branches.

    On the other hand, prior implementations lift loop acceleration methods from forward
    analysis to biabduction analysis by applying them separately on the pre- and 
    post-condition, which can lead to imprecise or even unsound acceleration results
    that do not form a loop invariant.
    In contrast, we propose \emph{biabductive loop acceleration}, which explicitly
    constructs and checks candidate loop invariants.
    For this, we also introduce a novel heuristic called \emph{shape extrapolation}.
    This heuristic takes advantage of locality in the handling of list-like data
    structures (which are the most common data structures found in low-level code)
    and jointly accelerates pre- and post-conditions by extrapolating the related
    shapes.

    In addition to making the analysis more precise, our techniques also make
    biabductive analysis more efficient since they are sound in just one analysis phase.
    In contrast, prior techniques always require two phases (as the first phase can
    produce contracts that are unsound and must hence be verified).
    We experimentally confirm that our techniques improve on prior techniques; both
    in terms of precision and runtime of the analysis.
\end{abstract}
\begin{document}
\citestyle{acmnumeric}
\maketitle

\begin{acks}
    This work was supported by the Czech Science
    Foundation project 23-06506S and the FIT BUT internal project FIT-S-23-8151.
    The work of the Austrian team leading to this result has received funding from the European Union's Horizon 2020 research and innovation programme under grant agreement No 101034440.
    The collaboration of the teams was also partially supported under the project VASSAL:
    ``Verification and Analysis for Safety and Security of Applications in
    Life'' funded by the European Union under Horizon Europe WIDERA Coordination and Support Action/Grant Agreement No. 101160022.
    \includegraphics[width=.03\textwidth]{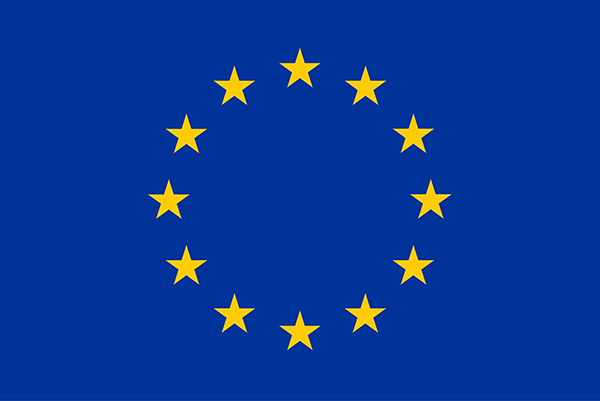}
\end{acks}


\section{Introduction}\label{sec:intro}

Over the last two decades, \emph{shape analysis} has proven to be one of the
most useful techniques for ensuring memory safety in programs.
This kind of analysis focuses on verifying memory-safe handling of linked data
structures by representing them as abstract memory \emph{shapes}.
Thereby, memory safety can often be verified with fully automatic reasoning for
a wide range of data structures.
Examples of successful shape analyzers include the tools Predator
\citep{predator}, which has won a number of medals at the well-known SV-COMP
competition (see \citep{SVCOMP20,SVCOMP24}), and Infer \citep{Infer}, which has
been used for several years to check large code bases at Meta (formerly
Facebook).

\paragraph{Biabduction-Based Shape Analysis}

Among the reasoning principles underlying shape analysis, \emph{biabduction} has
the unique ability to enable compositional analysis for open programs
(i.e., program fragments) by synthesizing invariants and function contracts
consisting of separation logic formulas \citep{reynolds}.
The ability to synthesize contracts, i.e., pairs of pre- and post-conditions,
allows for a highly modular inter-procedural analysis.
In addition, the compositional analysis with biabduction enjoys what 
\citet{biabd_conference,biabd} have called ``graceful imprecision'', i.e., the
analysis will find useful results for some parts of a program, even if it
introduced imprecisions for other parts.
In contrast, closed program analyses are likely to build up such imprecisions
and fail, even if they could handle further parts of the program otherwise.
Due to this and due to not requiring programmers to develop verification 
harnesses, biabduction-based shape analysis is often 
considered to be more useful for large-scale verification compared to techniques
for closed programs, which are advantageous for smaller, self-contained programs.
However, this advantage comes at the cost of more complex computation principles
as well as generally less expressive abstract shapes.
Moreover, existing biabduction-based shape analyses can compute unsound results
and, thus, require a second analysis phase to filter out these results.

\paragraph{Branching and Abduction.}

The highly path-sensitive analysis proposed by \citet{biabd_conference,biabd} does not
work well with branching if the branch to be taken cannot be determined purely
from the pre-condition of the analyzed function (e.g., because it depends on user
input or because the used logical fragment cannot express the dependency
sufficiently), see \citep[section 4.3]{biabd}.
We call these cases \emph{non-determinable branching}.
The problem with these cases arises since maintaining different pre-conditions
for each program
path is, in general, insufficient because the only sound precondition might
consist of a combination of these.
Thus, Calcagno et al. suggested a heuristic procedure for merging
pre-conditions, but their approach may fail to compute any valid pre-conditions
(see \Cref{subsec:sharemot}) at all.
In contrast, we present a novel technique of \emph{shared abduction}, which
allows for sound pre-condition computation across program branches.
The technique extends biabductive symbolic execution by tracking which program
locations share which pre-condition requirements.
Due to this, shared abduction circumvents the need for a verification phase for
programs with arbitrary branching, and, at the same time, can infer non-trivial
contracts in more cases than the traditional analysis (since it does not give up
on the cases where some sharing of information between branches is necessary).

\paragraph{Accelerating Biabductive Analysis}

Symbolic analyses generally require loop acceleration techniques to allow the
analyses to reach a fixed point.
In shape analysis, this acceleration replaces concrete pointers with more
abstract, typically inductive shape predicates such as list segments.
The analysis by \citet{biabd_conference} lifts such abstraction to the setting
of biabductive analysis by applying abstraction separately to the pre- and the
post-condition.
However, such a direct lifting is not guaranteed to result in a sound loop
invariant (see \citep[section 4.3]{biabd} or \Cref{subsec:forwardcomp}).
In contrast, we introduce a novel \emph{biabductive loop acceleration} scheme
that constructs candidate loop invariants after analyzing the loop body once.
This step allows us to verify the soundness of the candidate invariant
explicitly through another symbolic execution of the loop body.
To construct the candidate invariant, we introduce a novel \emph{shape
extrapolation} heuristic, which exploits the locality of typical data structure
traversals to find fitting abstract shape predicates.

\paragraph{Two Phases for Soundness?}

Even though the two-phase approach is easy to implement as the two phases
primarily differ in which biabduction rules are used, and even though most
related works rely on the second phase (see \citep{Broom,bottomup}), it is
quite natural to wonder whether the overhead of the two phases can be reduced.
Overcoming this overhead for the broader family of biabduction-based shape 
analyses is exactly the goal of this paper.
Thereby, our shared abduction technique avoids the unsoundness problem for
non-determinable branching and circumvents the need for the second phase (at the
same time, producing more non-trivial contracts than the previous approaches).
In addition, our loop acceleration approach only requires us to check the
extrapolated invariant for soundness, but this check is much more local and less
costly than the second analysis phase.\footnote{The traditional
approaches to biabduction, such as \citep{Broom,bottomup}, will analyze each loop
at least twice (to get to a fixpoint) in each of the phases, i.e., each loop is
analyzed \emph{at least four times}, but often even more (see \Cref{tab:iter}).
On the other hand, our approach may soundly find a loop invariant within one
analysis phase, which analyses the loop in general only twice.}
Our approach to constructing the invariant is heuristic and, hence, not always
applicable, but the underlying invariant check is still guaranteed to assert 
soundness.
In addition, our experiments demonstrate that our approach can significantly
improve the efficiency, i.e., reduce the needed number of loop iterations and
lower the runtime compared to the two-phase biabduction architecture in
practice.
We conjecture that this is because programmers do commonly write loops in a way
compatible with our approach.

\paragraph{Main Contributions}

The main contributions of this work in the context of analyzing sequential,
non-recursive programs are the following:

\begin{itemize}

  \item A novel sound analysis for loop-free code based on \textit{shared
  abduction} (\Cref{sec:worlds}).

  \item A novel sound way to construct and check loop invariants as part of
  biabductive shape analysis via \emph{biabductive loop acceleration}.
  It uses a novel heuristic to exploit locality via \textit{shape extrapolation}
  on list-manipulating programs (\Cref{sec:extrapol}).

  \item Formal proofs of shared abduction and biabductive loop acceleration
  being sound (\Cref{thm:worlds,thm:loops} with proofs in \Cref{app:proofs}).

  \item An experimental evaluation based on a proof-of-concept implementation 
  applied to a number of small-scale but rather challenging programs, including
  real-life library functions, that show the superiority of our approach
  with regard to runtime and completeness compared to established analyzers
  (\Cref{sec:impl}).

\end{itemize}

\paragraph{General Limitations}

Our acceleration approach is currently limited to programs manipulating various
kinds of lists (singly or doubly linked, possibly circular, nested, and
intrusive).
While this restriction coincides with prior work~\citep{biabd,Broom}, we hope
that exploiting locality for loop acceleration will apply to further data
structures (such as trees), but we must leave this for future work.
Furthermore, we only focus on non-recursive programs, following most previous
biabductive shape analysis approaches.
\citet{10.1016/j.scico.2017.05.007} introduced an extension to handle
recursion via a fixed point computation of the function contract,
but this is orthogonal to our work.
Moreover, our prototype tool is focused on low-level C code,
which rarely contains recursion anyway.



\section{Motivation}\label{sec:mot}

This section introduces the main ideas and motivations behind our work on an
intuitive level while also discussing why the classical biabductive shape analysis
cannot handle the presented example code well.
\Cref{sec:worlds,sec:extrapol} contain the technical details.

\subsection{Cross-Branch Abduction Sharing}\label{subsec:sharemot}

\Cref{lst:mot1} shows a program fragment that works on a data node based on user 
input read from the command line via \texttt{scanf}.
Based on this, \texttt{user\_choice} takes the user input and calls
\texttt{memcpy} with the corresponding arguments.
The exact invocation depends on the user input,\footnote{We chose user input 
as an easy to understand example of non-determinable input. Other cases of such 
input includes IO operations such as incoming network traffic or reading from a  
file.} which is modeled by a
non-deterministic choice in the analysis, and, in the case of
$\texttt{in\_mode}= 1$, also depends on whether $\texttt{curr}=\texttt{hd}$.
We call these kinds of branching \emph{non-determinable}, since the branch taken
at runtime can't be determined from the function parameters alone. 

\begin{wrapfigure}[15]{l}{0.4\textwidth}
  \begin{lstlisting}[language=C,label=lst:mot1,
    caption={Non-determinable branching}]
int user_choice(node *hd, node *lst, 
    node *curr, node *out) {
  int in_mode = 0;
  int read = scanf("%d", &in_mode);

  if (read <= 0) {
    return -1;
  } else if (in_mode == 0) {
    memcpy(out, hd,...);
  } else if (in_mode == 1) {
    if (curr != hd) {
      memcpy(out, curr,...);
    } else {
      memcpy(out, lst,...);
    }
  }
  return in_mode;
}\end{lstlisting}%
\end{wrapfigure}

\paragraph{Problem}

Non-determinable branching is difficult to handle for a
path-sensitive biabduction-based shape analysis as proposed in~\citep{biabd,biabd_conference}.
This is because such an analysis will generate one precondition per program 
branch in \Cref{fig:mot1}, expressed with standard separation logic 
connectives.\footnote{We use ``$\formsep$'' to separate the formulas' pure and 
  spatial parts (if any).
  In contrast to the program variables \emph{hd}, etc., the
  $\ell_i$ variables are purely logical and implicitly universally quantified.
  We write program variables in formulas in \emph{italic} and otherwise in
  \texttt{typewriter} font.}
We note that the preconditions can contain branching conditions that depend on the function's arguments, e.g., the preconditions in \Cref{fig:mot1} contain the predicates $\mathit{curr}=\mathit{hd}$ and $\mathit{curr}\neq\mathit{hd}$.
However, non-determinable branching, such as for $\mathit{in\_mode}=0$, cannot be modeled in terms of the function's arguments, and hence such conditions can never be part of a precondition.
Then the problem arises that the different branches require different memory locations to be allocated (note the  different pointers arguments to \texttt{memcpy}),
e.g. $\mathit{hd}\mapsto\ell_1*\mathit{out}\mapsto\ell_4$ for the branch with
$\texttt{in\_mode}=0$.
However, due to the non-deterministic input, none of the required allocations for one branch guarantee a memory-safe execution for all user inputs.

\begin{figure}[t]
  \captionsetup{skip=-1pt}
  \begin{gather*}
    \textsf{true}\: \text{ for } \texttt{read}\leq0 \lor \texttt{in\_mode}\geq2 \quad\mid\quad \mathit{hd}\mapsto\ell_1*\mathit{out}\mapsto\ell_4\: \text{ for } \texttt{in\_mode}=0\\
    \mathit{curr}\neq\mathit{hd}\formsep\mathit{curr}\mapsto\ell_3*\mathit{out}\mapsto\ell_4 \text{ and }
    \mathit{curr}=\mathit{hd}\formsep\mathit{lst}\mapsto\ell_2*\mathit{out}\mapsto\ell_4 \ \text{ for } \texttt{in\_mode}=1
  \end{gather*}
  \caption{Insufficient candidate pre-conditions for \texttt{user\_choice}}
  \label{fig:mot1}
  \Description{Insufficient candidate pre-conditions for all branches of the
    program. The pre-conditions contain at most two points-to predicates, although
    less than three cannot be sound in general due to the non-determinism of the
    function.}
\end{figure}

\paragraph{Previous Solutions}

%
This problem has already been noticed in the original work~\citep{biabd,biabd_conference} and was partially addressed by a heuristic that would combine pre-conditions such that they could cover move branches.
This heuristic has been implemented as an optional strategy in the Abductor tool and was subsequently made the default in the Infer tool.
However, this heuristic does not guarantee that the found pre-conditions are
sound, and it can easily miss safe pre-conditions even for simple loop-free code.
Indeed, for the example above, the heuristic finds the combined pre-condition $\mathit{curr}\neq\mathit{hd}\formsep\mathit{hd}\mapsto\ell_1*\mathit{lst}\mapsto\ell_2*\mathit{curr}\mapsto\ell_3*\mathit{out}\mapsto\ell_4$,
but it does not produce a sound pre-condition for the case $\texttt{curr}=\texttt{hd}$.
We also remark that the heuristic is quite fragile as renaming the variable
\texttt{hd} to \texttt{first} enables the heuristic to find a safe pre-condition for the case $\texttt{curr}=\texttt{hd}$ in Infer while leading to a crash for
Abductor.\footnote{We observed this behavior with the commit \inferv\\ found at
  \url{https://github.com/facebook/infer} and the publicly available Abductor
  release at \url{http://www0.cs.ucl.ac.uk/staff/p.ohearn/abductor.html}.}

We finally note that recent work \citep{outcomesl}, developed concurrently
with our approach, also addresses the problem of unsound pre-conditions for
branching programs.
They introduce a specialized operator called \emph{tri-abduction}, which generalizes the setting of \emph{bi-abduction} to simultaneously compute a combined pre-condition for two branches.
While this operator offers more precision than the (classical) biabduction operator we rely on in this paper, it is unclear how to build a realistic symbolic execution based on the tri-abduction operator.
To this date, such an analysis has only been sketched but not implemented.
We comment more on the relationship to our approach in \Cref{sec:relatedwork}.

\paragraph{Shared Abduction}
%
The fundamental problem discussed above is that the different program paths
cannot be analyzed in isolation; instead, we must combine their preconditions.
That is, biabduction-based analyzers need to track which program configurations
can be reached from the same initial configuration and synchronize the abduced
requirements.
Moreover, the analysis needs to be precise in tracking which configurations are
allowed to exchange such information -- otherwise, inconsistencies can be
introduced by exchanging information among independent program points.
Our solution is, therefore, to track exactly which program configurations can be
reached from a common pre-condition and explicitly share newly found
requirements with all of these configurations (and only such configurations).
\Cref{sec:worlds} introduces how this technique, which we call \textit{shared
abduction}, allows sound handling of all kinds of branching.

Our technique has the advantage of being lightweight and easily implementable on
top of an existing biabductive analysis.  In the example above, our analysis
first abduces the precondition
$\mathit{hd}\mapsto\ell_1*\mathit{out}\mapsto\ell_4$ for the case
$\texttt{in\_mode}=0$.
The analysis then proceeds with the branch for $\texttt{in\_mode}=1$, making
a case distinction on $\mathit{curr}=\mathit{hd}$.
Shared abduction retains the required allocation
$\mathit{hd}\mapsto\ell_1*\mathit{out}\mapsto\ell_4$ for both cases as this
requirement is already part of the shared precondition.
Then, by analyzing the nested branches, the requirements
$\mathit{curr}\neq\mathit{hd} \formsep \mathit{hd}\mapsto\ell_1 *
\mathit{curr}\mapsto\ell_3 * \mathit{out}\mapsto\ell_4$ and
$\mathit{curr}=\mathit{hd} \formsep \mathit{hd}\mapsto\ell_1 *
\mathit{lst}\mapsto\ell_2 * \mathit{out}\mapsto\ell_4$ are computed.
We note that this guarantees the soundness
of the found pre-condition and its completeness regarding the branching, thus
outperforming the previous heuristic.

\subsection{Shape Extrapolation for Biabductive Acceleration}\label{subsec:extramot}

Our second contribution aims at the analysis of loops, which generally
requires accelerating the symbolic execution to allow the analysis to reach a
fixed point.

\paragraph{Problems}

\begin{wrapfigure}[10]{l}{0.45\textwidth}
  \begin{lstlisting}[language=C,label=lst:mot2,caption={Nested list traversal}]
void weighted_sum(o_node *o, long *sum){
  while (o != NULL) {
    i_node *i = o->inner;
    while (i != NULL) {
      *sum = (*sum) + 
        (o->wgt * i->elem);
      i = i->next;
    }
    o = o->next_o;
  } 
}\end{lstlisting}
\end{wrapfigure}

The prior technique for loop acceleration, proposed by \citep{biabd_conference,biabd} and adopted
in \citep{Broom}, separately abstracts the pre- and post-condition
with no other information than the formulas themselves taken into account.
Intuitively, after analyzing some loop iterations and applying the abstraction 
operator, the obtained formulas will stabilize, and a fixed point is reached.
Thereby, the abstraction follows the intuitive principle of collecting linked
memory blocks with a similar layout into a single abstract
shape predicate.
In the context of simple singly-linked lists, this means that the abstraction
procedure scans the formula for points-to predicates $x \mapsto \ell$ and
$\ell\mapsto z$, linked by a location $\ell$ (i.e., the target of the first 
predicate contains the address of the second), or a linked list segment 
$\ls(x,\ell)$ and a points-to $\ell\mapsto z$, respectively.
Abstraction then replaces these predicates with the single predicate $\ls(x,z)$.
However, abstraction cannot be applied when there is a program variable $y$ that
references $\ell$, e.g., as $y = \ell$.
This is not supported since it would lose the information that
variable $y$ is allocated (note that $\ell$ does not occur in $\ls(x,z)$
anymore).
More generally, abstraction cannot be applied if there is a program
variable $y$ whose value depends on $\ell$, such as $y = v\land \ell\mapsto v$. 
While the abstraction principle is intuitive, there are also multiple 
drawbacks, which we discuss next.

$(1)$ For the example in \Cref{lst:mot2}, Abductor, Infer, and Broom do not reach a fixed point for the inner loop and thus cannot synthesize any contract.
This is because the value pointed to by sum after $n$ loop iterations is $\ell_\mathit{sum}+\ell_1\cdot \ell_w+\dots +\ell_n\cdot \ell_w$, where $\ell_w$
is the value pointed to by $o.\texttt{wgt}$ and the $\ell_i$ are the \texttt{elem} values of the list nodes traversed so far.
Thus, the dependence on the values $\ell_i$ blocks abstraction (as described above).
We note that the design of a stronger abstraction operator is not straight-forward because we also need to track values in memory precisely as they could be relevant for memory accesses based on pointer arithmetic in other parts of the program.

\begin{wrapfigure}[6]{l}{0.4\textwidth}
  \begin{lstlisting}[language=C,label=lst:mot3,caption={Offset list traversal}]
void traverse_skip_two(node *list) {
  node *tmp = list->next->next;
  while (tmp != NULL) {
    tmp = tmp->next;
  } 
}\end{lstlisting}
\end{wrapfigure}

$(2)$ A formula-based abstraction operator can easily lose too much information.
For example, Abductor, Infer, and Broom fail for the simple example in \Cref{lst:mot3}.
The reason is as follows:
The abstraction operation (as described above) contracts pointers chains of length at least two into a list segment, resulting in the formula
$\mathit{list}\neq\Null\formsep \mathit{ls}(\mathit{list},\Null)$.
This predicate describes a non-empty list segment with at least one node.
However, this formula does not suffice as a pre-condition that guarantees memory safety because \texttt{traverse\_skip\_two} requires a list of length at least two as input.
We provide a more detailed comparison with our work in \Cref{subsec:forwardcomp}.

$(3)$ Furthermore, acceleration based on abstraction (as described above and implemented in Abductor, Infer, and Broom) can be highly inefficient.
In general, every loop will require at least two (often three) analysis 
iterations, as abstraction can often only be applied after the second loop
iteration and a fixed point can only be checked for after another iteration.
In the presence of inner loops, such as for the example in \Cref{lst:mot2}, this quickly multiplies, e.g., amounting to nine analysis iterations for the inner
loop in \Cref{lst:mot2} just for the first analysis phase.

\paragraph{Locality}

The problems described above are mostly related to the direct application of abstraction
for acceleration and its missing ability to take into account more information
about the loop, e.g. the observation that inductive data structures
are often traversed one step at a time.
For example, in \Cref{lst:mot2}, the nested linked list is traversed in such a way
to compute the weighted sum of the elements in the inner lists.
It is apparent that each of the two loops operates on a local, shifting view of
the respective traversed list plus some context.
For the inner loop, this means that the loop only operates on a unique
\texttt{i\_node} at a time while also accessing the same variables
\texttt{sum} and \texttt{o->wgt} in each iteration.
Similarly, the outer loop only operates on one \texttt{o\_node} at a time.
These shifting views on the traversed shapes are akin to ``local actions'' (see 
\citep{abstrsl}).

\paragraph{Biabductive Loop Acceleration}

This observation allows us to extrapolate what the analysis abduces from a
single iteration to arbitrarily many iterations and directly compute a candidate
loop invariant if applicable.
We call this heuristic \textit{shape extrapolation}.
It is part of our \emph{biabductive loop acceleration}, which consists of three
main steps: First, we use the analysis result of a single iteration to obtain
locality information about the shape and the context; second, we use this
information to extrapolate the shape to an abstract one; third, we check that
the heuristically constructed state is a sound invariant.

\begin{figure}[t]
  \pgfmathsetmacro{\sep}{1.7}
  \begin{tikzpicture}[
    cell/.style={rectangle,draw=black,minimum width=1cm,minimum height=1cm},
    list/.style={rectangle},
    MyLongArrow/.style args={#1 -- #2}{
        insert path={let \p1=($(#1)-(#2)$) in}, 
        single arrow, draw=black, minimum width=15mm, minimum height={veclen(\x1,\y1)}, inner sep=0mm, single arrow head extend=1mm, double arrow head extend=1mm
    },
    scale=0.97,
    every node/.style={transform shape}
    ]
  
  \node[list] (1) at (0,0) {$s_1:(I\neq\Null\land i=I \formsep$};
  \node[list] (2) [right of = 1,xshift=3.2cm] {$\highlight{ACMRed}{I.\texttt{next}\mapsto\ell_1*I.\texttt{elem}\mapsto\ell_2}$};
  \node[list] (3) [right of= 2,xshift=2.6cm] {$*o.\texttt{wgt}\mapsto\ell_3*$};
  \node[list] (4) [right of=3,xshift=1.5cm] {$\highlight{ACMBlue}{\mathit{sum}\mapsto \ell_4}\statesep$};
  \node[list] (5) at (.3,-.7) {$I\neq\Null\land i=\ell_1\formsep $};
  \node[list] (6) [right of =5,xshift=2.9cm] {$\highlight{ACMOrange}{I.\texttt{next}\mapsto\ell_1*I.\texttt{elem}\mapsto\ell_2}$};
  \node[list] (7) [right of=6,xshift=2.6cm] {$*o.\texttt{wgt}\mapsto\ell_3*$};
  \node[list] (8) [right of=7,xshift=1.6cm] {$\highlight{ACMLightBlue}{\mathit{sum}\mapsto\ell_4+\ell_3\cdot\ell_2})$};

  \node[list] (11) at (0,-\sep) {$s_\mathit{inv}:(I\neq\Null\land i=I \formsep$};
  \node[list] (12) [right of = 11,xshift=2.2cm] {$\highlight{ACMRed}{\ls(I,\ell_1)}*$};\node[list] (122) [right of = 12,xshift=.75cm] {$\highlight{ACMRed}{\ls(\ell_1,\Null)}$};
  \node[list] (13) [right of= 122,xshift=1.85cm] {$*o.\texttt{wgt}\mapsto\ell_3*$};
  \node[list] (14) [right of=13,xshift=1.5cm] {$\highlight{ACMBlue}{\mathit{sum}\mapsto \ell_4}\statesep$};
  \node[list] (15) at (.4,-\sep-.7) {$I\neq\Null\land i=\ell_1\formsep $};
  \node[list] (16) [right of =15,xshift=1.8cm] {$\highlight{ACMOrange}{\ls(I,\ell_1)}*$};
  \node[list] (9) [right of =16,xshift=.75cm] {$\highlight{ACMRed}{\ls(\ell_1,\Null)}$};
  \node[list] (17) [right of=9,xshift=1.85cm] {$*o.\texttt{wgt}\mapsto\ell_3*$};
  \node[list] (18) [right of=17,xshift=1.4cm] {$\highlight{ACMLightBlue}{\mathit{sum}\mapsto\top})$};

  \node (19) [right of=18, xshift=1.1cm, yshift=.9cm] {$ $};

  \begin{scope}[on background layer]
  \draw[-{[flex']>}, in=140, out=230,draw=ACMRed,line width=2pt,dashed,looseness=.9] (2.west) to (12.west);
  \draw[-{[flex']>}, in=140, out=230,draw=ACMOrange,line width=2pt,dashed,looseness=.9] (6.west) to (16.west);
  \draw[-{[flex']>}, in=50, out=300,draw=ACMRed,line width=2pt,dashed] (2.east) to (9.east);
  \draw[-{[flex']>}, in=50, out=300,draw=ACMRed,line width=2pt,dashed,looseness=1.3] (2.east) to (122.east);

  \draw[-{[flex']>}, in=150, out=220,draw=ACMBlue,line width=2pt,dashed] (4.west) to (14.west);

  \draw[-, in=90, out=0,draw=ACMBlue,line width=2pt,looseness=1.5,dashed] (4.east) to (19);
  \draw[-, in=90, out=0,draw=ACMLightBlue,line width=2pt,looseness=1.3,dashed] (8.east) to (19);
  \draw[-{[flex']>}, in=0, out=-90,draw=ACMLightBlue,line width=2pt,dashed] (19.north) to (18.east);
  \draw[-{[flex']>}, in=0, out=-90,draw=ACMBlue,line width=2pt,dotted] (19.north) to (18.east);
  \end{scope}

\end{tikzpicture}
  \caption{State $s_1$ after the first loop iteration analysis and the constructed
  candidate invariant $s_\mathit{inv}$, with color-coded arrows showing the
  information flow between different subformulas}
  \Description{The figure shows that the shape information after the analysis of
  one loop iteration is used to construct the shapes of the candidate invariant.
  It also shows how other changes, like the value of \texttt{sum}, is
  abstracted as part of the invariant.}
  \label{fig:mot_extrapol}
\end{figure}
In the case of \texttt{weighted\_sum}, after the first iteration of the inner 
loop, the analysis finds the state
$s_1$ depicted in \Cref{fig:mot_extrapol}, consisting of a pre- and post-condition separated by $\statesep$.
Our analysis then partitions the pre- as well as the post-condition into a shape 
and a context part, where the shape part is $I.\texttt{next}\mapsto\ell_1*I.\texttt
{elem}\mapsto\ell_2$, and the context is $o.\texttt{wgt}\mapsto\ell_3*\mathit{sum}\mapsto \ell_4$ 
for the pre-condition and $o.\texttt{wgt}\mapsto\ell_3*\mathit{sum}\mapsto \ell_4 + \ell_3\cdot \ell_2$ for the post-condition.
The following heuristic obtains this partitioning:
We consider the changed variables (here \texttt{i},\texttt{sum}) whose value moved to some pointer value (here \texttt{i} whose value moved to \texttt{i->next}).
The predicates associated with these variables are then put into the shape parts and the others into the context.

Based on this partitioning, our procedure directly constructs a (candidate) loop invariant $s_\mathit{inv}$ (see \Cref{fig:mot_extrapol}).
Thereby, our procedure accelerates the shape part of the pre- as well as the post-condition  (here we obtain in both cases the list segment $\ls(I,\ell_1)$, with internal next pointer \texttt{next} and data field \texttt{elem}).
Intuitively, these predicates correspond to the loop iterations up to the current point. 
In addition, we add predicates $\ls(\ell_1,\Null)$ to the shape part of $s_\mathit{inv}$, for both the pre-and post-condition,
which are taken as the accelerated predicate $\ls(I,\ell_1)$ of the pre-condition of $s_1$, where $I$ has been replaced with $l_1$, which is the current value of $i$, and $l_1$ has been replaced with $\Null$, which has been obtained from the loop condition.
Intuitively, these predicates correspond to the future loop iterations up to the loop's termination.
We refer to the red and orange colors in Fig.~\ref{fig:mot_extrapol} to illustrate the information flow.
The context part of $s_\mathit{inv}$ keeps the context of $s_1$, except that our procedure abstracts the value of \texttt{sum} in the post-condition (with the unknown value $\top$) because it cannot be tracked precisely.
Next, our procedure checks that $s_\mathit{inv}$ is indeed a loop invariant, which requires one symbolic execution of the loop body and an entailment check.

Finally, based on the loop invariant our analysis constructs a contract that abstracts the inner
loop and that can be used for the analysis of the outer loop:
\begin{align*}
  (i=I      & \formsep \ls(I,\Null)*o.\texttt{wgt}\mapsto\ell_3*\mathit{sum}\mapsto
  \ell_4\statesep                                                                         \\
  \ i=\Null & \formsep \ls(I,\Null)*o.\texttt{wgt}\mapsto\ell_3*\mathit{sum}\mapsto\top
  ).
\end{align*}
Based on this contract, our procedure then also accelerates the outer loop in a
similar fashion.
Lastly, we note that our procedure requires exactly two
iterations per loop (in sum four): 
one to analyze the effects of the loop and a
second to check whether the constructed state is a loop invariant (as opposed to
the nine iterations in sum mentioned earlier for the traditional acceleration).
Similarly, our approach fails fast if the constructed candidate invariant is 
unsound instead of requiring a second analysis phase with many more analysis
steps.



\section{Preliminaries}\label{sec:prelim}

We present our new techniques for a simple but standard setting
that is described in the following since it does not require any specific logic
fragment or biabduction method.

\subsection{Programming Language and Memory Model}

Let \textit{Var} be a countably infinite set of variables and \textit{Val}
be a countably infinite set of values containing the value $\top$.
Furthermore, let \textit{Fld} be a finite set of field names and
$\mathit{Loc}\subseteq \mathit{Val}$ be the set of memory locations such that
$\Null \in \mathit{Loc}$.
If a value $v$ has a field $f$, we write $v.f$ to denote the value part
corresponding to that field.
Moreover, dereferencing without explicit fields is implicitly encoded as dereferencing an auxiliary field \texttt{data}.
Further, we assume that $\mathbb{N}\subseteq \mathit{Val}$.

\begin{figure}[t]
    \begin{gather*}
        \mathit{expr} \mathdef \Null \mid k\in\mathbb{N} \mid x\in\mathit{Var}
        \mid\ ?\mid \textsf{unop } \mathit{expr} \mid \mathit{expr}
        \textsf{ binop } \mathit{expr} \quad
        \oplus\ \mathdef\ =\mid\neq\mid\leq\mid\geq\mid <\mid >\\
        \mathit{stmt} \mathdef\ x\!=\!\mathit{expr} \mid x_1\!=\!*x_2.f \mid
        *x_1.f\!=\!x_2\mid \textsc{return}\ x
        \mid \textsc{Assume}(x_1\oplus x_2)\mid \\
        \textsc{Assert}(x_1\oplus x_2) \mid x_r=f(x_1,\dots,x_n)\mid x=\textsc{Alloc}(f_1,\dots, f_n)\mid \textsc{Free}(x)
    \end{gather*}
    \caption{The syntax of the programming language $L$.}
    \Description{The syntax of our programming language L. It is close to a subset
        of C99 with load and stores, as well as assume and assert commands.}
    \label{fig:lang}
\end{figure}

\begin{definition}[Programming Language L]\label{def:prog-lang}
    We define a C-like \textit{programming language L} in \Cref{fig:lang}.
    The language comprises standard expressions, statements for reading ($x_1=*x_2.f$) and
    writing ($*x_1.f=x_2$) through pointers (with the C-like syntactic sugar
    of $x\texttt{->}f$ for $*x.f$), an additional non-determinism operator
    $?$, as well as implicit control flow statements \textsc{Assume} and
    \textsc{Assert}.
    Moreover, we include the C-like commands for memory (de-)allocation \textsc{Alloc} and \textsc{Free}.
\end{definition}

We represent \textit{functions} in $L$ by a function name $f$, a list of
argument variables $a_1,\dots,a_n$, $n \geq 0$, and a
function body $\mathit{body}_f$ that consists of a \textit{control flow graph}.
A \textit{control flow graph} (CFG) is a tuple
$(V,E,\mathit{entry}_f,\mathit{exit}_f)$ such that $V$ is a set of program
locations with dedicated locations $\mathit{entry}_f,\mathit{exit}_f\in V$,
and $E\subseteq V\times\mathit{stmt}\times V$ is a set of edges between
program locations labeled with statements from $\mathit{stmt}$.
A \textit{trace} $t$ of a CFG $(V,E,\mathit{entry}_f,\mathit{exit}_f)$ is an
alternating sequence $[v_0,st_1,v_1,\dots,st_n,v_n]$, $n \geq 0$,  of
vertices $v_i\in V$ and statements $st_{i+1} \in \mathit{stmt}$ such
that $(v_i,st_{i+1},v_{i+1})\in E$ for all consecutive
$v_i,st_{i+1},v_{i+1}$ in $t$, $0 \leq i < n$.
If only a part of a trace is relevant, we write
$[t,v_0,st_1,\dots,v_n]$ for the trace continuing from $v_0$ and reaching $v_n$ such that $t$ ends in $v_0$.

We note that CFGs as stated in \Cref{def:prog-lang} can be used to model
arbitrary branching and looping constructs (such as \texttt{if-then-else} and
\texttt{while}), and hence the basic statements of programming language $L$ do
not need to cover these features.
We will further make the following assumptions:
\begin{enumerate}
    \item Each function $f$ is either loop-free or consists of a single loop
    such that the loop header is $\mathit{entry}_f$;
    i.e., we require that the CFG of $f$ is either acyclic or all
    back-edges of $f$ (the edges returning to a loop header) return to
    $\mathit{entry}_f$.
    This assumption is w.l.o.g, as loops that are 
    embedded in a bigger context can be represented by calls to a function
    whose body is precisely the loop.\footnote{More complex cyclic control 
    flow, e.g. describing the common \texttt{break}/\texttt{continue}/\texttt{goto}
    statements can also be emulated by introducing auxiliary out parameters
    which are then used to guide the control flow.
    }

    \item The programs do not contain (mutually) recursive functions.

    \item Each vertex $v\in V$ has at most two outgoing edges in $E$.
\end{enumerate}

\begin{definition}[Program Configuration]
    A \textit{program configuration} $\mathit{cnf}\in\conf$ is either a pair
    $(S,H)$ consisting of a \textit{stack} $S$ and a \textit{heap} $H$ or the 
    dedicated \textit{err} configuration.
    The stack $S : \mathit{Var} \rightharpoonup \mathit{Val}$ is a partial map
    from variables to values.
    The heap $H : (\mathit{Loc}\times \mathit{Fld}) \rightharpoonup_\mathit{fin}
        \mathit{Val}$ partially maps finitely many pairs of memory locations and
    field names into values.
\end{definition}

The semantics of the programming language $L$ is standard (its formalization can be
found in \Cref{fig:langsem} in the appendix).
We use the notation $\left(\mathit{cnf}_1,\mathit{st}\right) \rightsquigarrow
\mathit{cnf_2}$ to denote that a program reaches a configuration
$\mathit{cnf}_2$ from a configuration $\mathit{cnf}_1$ by executing a statement
\textit{st}.
The semantics of traces is defined as the transitive closure
$\rightsquigarrow^*$ with regard to the statements in the trace
($\mathit{cnf}_i\in \conf$):\begin{align*}
    \left(\mathit{cnf}_1 , [v_0]\right)           & \ \rightsquigarrow^*
    \mathit{cnf}_1,  \\
    \left(\mathit{cnf}_1, [t,v_1,st_2,v_2]\right) & \ \rightsquigarrow^* \mathit{cnf}_3
    \text{ if } (v_1,st_2,v_2)\in E \land \left(\mathit{cnf}_1, [t,v_1]\right)
    \rightsquigarrow^*\mathit{cnf}_2 \land \left(\mathit{cnf}_2, st_2\right)
    \rightsquigarrow \mathit{cnf}_3.
\end{align*}

\subsection{Separation Logic}

Next, we introduce a simple separation logic (\emph{SL}) fragment that is suited
for biabduction-based shape analysis.
Even though most shape analyses in recent literature are based on
more sophisticated fragments, this simple fragment suffices to define
our central contributions, which can be easily lifted to more powerful
separation logic fragments as well (indeed, we use a more expressive fragment
in our later presented experiments).
The formulas of \emph{SL} are based on symbolic heaps \citep{decidable}.

\begin{definition}[Separation Logic SL]\label{def:seplog}
    The \textit{separation logic fragment SL} contains the standard connectives
    of separation logic and an inductive predicate \textit{ls} denoting a singly-linked list segment.
    \Cref{fig:seplogsyntax} shows the full syntax of \emph{SL}.
    Symbolic heaps $\varphi$ clearly distinguish between spatial
    parts $\Sigma$ and pure parts $\Pi$ and are combined to disjunctive symbolic
    heaps $\Delta$.
\end{definition}

Based on this, we define \textit{contracts} for functions in the programming
language to be pairs of formulas from \textit{SL} where we call
the parts of the pair a \emph{pre-condition} and a \emph{post-condition}, respectively,
and denote them as $c.\mathit{pre}$ and $c.\mathit{post}$ for a contract $c$.
Basic contracts for all statements in the language defined in \emph{SL} can be found in \Cref{app:ctxt}.

Formulas from \textit{SL} are also evaluated against program configurations
with a judgement $\vDash$, such that $\mathit{cnf}\vDash P$
denotes that \textit{cnf} is a model of formula $P$.
\Cref{fig:seplogsemantics} in the appendix describes this in full detail.
We note that we chose the standard semantics for $*$
and $\mapsto$.
Furthermore, we introduce the \textit{entailment} judgement of \textit{SL},
written $P\vdash Q$, as: $P\vdash Q \text{ iff }
    \forall \mathit{cnf}\in \conf.\ \mathit{cnf}\vDash P \Longrightarrow
    \mathit{cnf}\vDash Q. $ Below, we may use $\mathit{SL}$ to refer
directly to the language of \textit{SL} formulas.

\begin{figure}[t]
    \begin{gather*}
        \varepsilon\ \mathdef\ \Null \mid k\in \mathbb{N} \mid x\in\mathit{Var}
        \mid \text{\textsf{unop} } \varepsilon \mid \varepsilon_1
        \text{ \textsf{binop} } \varepsilon_2\mid \top \quad \oplus\ \mathdef\
        =\mid\neq\mid\leq\mid\geq\mid <\mid >
        \\
        \Sigma\ \mathdef\ x.f\mapsto \varepsilon \mid \Sigma_1*\Sigma_2\mid
        \mathit{ls}(x,\varepsilon) \mid \text{\textsf{emp}}\quad \Pi\mathdef
        \Pi_1\land\Pi_2\mid \textsf{true}\mid \varepsilon_1 \oplus \varepsilon_2\\
        \varphi \mathdef \Pi\formsep \Sigma \quad \Delta \mathdef \varphi \lor \Delta
        \mid \varphi
    \end{gather*}
    \caption{The syntax of the separation logic fragment \emph{SL}.}
    \Description{The syntax of the chosen separation logic SL. It includes standard
        logic operators, as well as a points-to predicate and an
        inductive list segment predicate.}
    \label{fig:seplogsyntax}
\end{figure}

\paragraph{Variables in SL}
We call all variables occurring in a program $Pr$ the \textit{program
    variables} $\mathit{PVar}\subseteq \mathit{Var}$ of $Pr$ and assume that the
program variables are unique for each function in $Pr$.
We call the variables in $\mathit{LVar}\mathdef\mathit{vars}(P)\setminus
    \mathit{PVar}$ logical variables of a formula $P$.
We define the dedicated logical variable $\mathit{return}_f$ to denote the return
value of a function $f$, if any.

\begin{definition}[Normal Form]\label{def:normal}
    Similar to \citet{otherabstraction}, we define formulas in \textit{SL} to be
    in normal form if they satisfy the following requirements:
    $(1)$ all variables in \textit{PVar} are defined uniquely by an equality with
    a logical variable that denotes their current value;
    $(2)$ other than in these equalities, \textit{PVar}s do not occur in any
    other term.
    This normal form guarantees that even if a program variable has a complex
    value described by a compound term, its value is always represented by a
    single logical variable.
    For the sake of readability, we omit the explicit \textit{PVar} equalities
    in most examples and only show how the formulas would look after simplification.
\end{definition}

\begin{example}
    The formula $x=X\land i<13\formsep \mathit{ls}(X,y)*y\mapsto i$ with $\{x,y,i\}
        \subseteq \mathit{PVar}$ is equivalent to the normal form $x=X\land y=\ell_y
        \land i=\ell_i\land \ell_i<13\formsep \mathit{ls}(X,\ell_y)*\ell_y\mapsto\ell_i$ where
    $\ell_y,\ell_i \in \mathit{LVar}$ are fresh.
\end{example}

\paragraph{Further Notation}

We denote with $P[x/y]$ the formula $P$ with the variable $y\in
    \mathit{LVar}$ substituted with $x$ or with the equality for
$y\in \mathit{PVar}$ exchanged in the normal form of $P$ with $y=x$, respectively.
We often denote a formula $\Pi\formsep\Sigma$ by only $\Pi$ or $\Sigma$ if $\Sigma =
    \textsf{emp}$ or $\Pi = \textsf{true}$, respectively.
Furthermore, we denote the composition of formulas $\varphi_1=\Pi_1\formsep\Sigma_1$
and $\varphi_2=\Pi_2\formsep\Sigma_2$ as $\varphi_1*\varphi_2\mathdef
    \Pi_1\land\Pi_2\formsep\Sigma_1*\Sigma_2$.

\begin{definition}[Abstraction]
    An abstraction function $\abs:\mathit{SL}\rightarrow\mathit{SL}$ takes
    a formula in \textit{SL} and returns a potentially different formula such
    that it abstracts a given formula $P$ such that $P \vdash \abs(P)$.
\end{definition}

\begin{example}
    An abstraction procedure $\abs$ as described by
    \citet{abstraction_origin} abstracts consecutive pointer chains into list
    segments, i.e., $\abs(a.\texttt{next}\mapsto b * b.\texttt{next}\mapsto c) =
        \ls(a,c)$.\footnote{As this step loses information about $b$, it is only applied
        in contexts in which $b$ is not relevant otherwise. See, e.g., \citep{david}.}
\end{example}

\subsection{Biabduction-based Shape Analysis}\label{subsec:analysis}

\begin{definition}[Biabduction]
    \textit{Biabduction} is the process of solving a query $P*\boxed{M}\vdash
        Q*\boxed{F}$ for given SL formulas $P$ and $Q$ by computing an
    \textit{antiframe} (or missing part) $M$ and a \textit{frame} $F$ such that
    the entailment is valid.
\end{definition}

We are only interested in solutions for $M$ that do not contradict $P$, as
otherwise, the entailment would be trivially valid.
A \textit{biabduction procedure} is then an algorithm that, given two
formulas, either computes a fitting frame and anti-frame or fails.
The steps to compute a frame and anti-frame are called \emph{frame inference}
and \emph{abduction}, respectively.
For the sake of saving space, we do not develop a full biabduction procedure here but
refer the reader to \citep{biabd,biabd_conference,Broom} for detailed
descriptions.

\begin{definition}[\AbsState s]
    An \textit{\absstate}\:$s$ is an intermediate contract $(P\statesep Q)$ where
    $P\in \varphi$ and $Q\in \Delta$.
    To distinguish these from finished contracts, we call $P$ the candidate
    pre-condition ($s.\mathit{pre}$) and $Q$ the current post-condition ($s.\mathit{curr}$).
    In an \absstate, each function argument $a_i\in \mathit{PVar}$ is
    associated with an \textit{anchor} variable $A_i\in \mathit{AnchVar}
        \subseteq \mathit{LVar}$ (in upper case) denoting its value at $\mathit{entry}_f$.
    We omit equalities of the form $x=X$ from $\Pi_P$ if they are not relevant.
\end{definition}

\paragraph{Biabductive Symbolic Execution Step}
Let there be an \absstate\:$(P\statesep Q)$ at a program location $l$ for a
statement $st$ with contract $(L \statesep R)$ and a location $l'$ such
that $(l,st,l')\in E$.
Then $st$ can be symbolically executed by solving the biabduction query
$Q*\boxed{M}\vdash L*\boxed{F}$ resulting in the new \absstate\:$(P*M\statesep R*F)$.
As in \citep{biabd,biabd_conference,Broom}, we require that
(1)~$\mathit{var}(M) \subseteq \mathit{LVar}$ and that (2)~$P*M$ is
satisfiable.
If such an $M$ does not exist, we say that the biabduction \emph{fails}.

\begin{definition}[Biabduction-based Shape Analysis $\ana_B$]\label{def:BiAbdSA}
    A basic \emph{biabduction-based shape analysis} $\ana_{B,\abs}$ uses a biabduction procedure $B$
    and an abstraction procedure $\abs$ to analyze programs in our 
    programming language.
    Thereby, it analyzes the functions \emph{bottom-up} along the
    call tree, starting from its leaves.
    In each step, the analysis takes an \absstate\:and symbolically executes
    the next statement from it by updating the state accordingly.
    In the case of multiple contracts, the analysis has to determine the
    applicable ones and continue from each of these.

    Furthermore, the analysis runs for a function $f$ until it reaches a fixed point,
    i.e., until no new \absstate s are computed.
    A common way to check for this condition is to check whether new \absstate s
    entail already computed ones.
    To enforce termination, $\ana_{B,\abs}$  also applies $\abs$ to abstract
    the \absstate s at loop heads.
    Finally, the pairs of candidate pre-conditions and current post-conditions
    forming the \absstate s that reached $\mathit{exit}_f$ become its
    contracts.
\end{definition}

We now fix an arbitrary, but correct biabduction-based shape analysis 
$\ana_{B,\abs}$ , which we extend in the following sections.

\begin{definition}[Soundness of \AbsState s]\label{def:soundst}
    An \absstate\:$s=(P\statesep Q)$ is called \textit{sound} for a trace $t$,
    written as the Hoare triple $\{P\}\ t\ \{Q\}$, iff
    $$\forall \mathit{cnf},\mathit{cnf}'\in \conf.\ \mathit{cnf}\vDash
        P\land \left(\mathit{cnf},t\right)\rightsquigarrow^* \mathit{cnf}'
        \Longrightarrow \mathit{cnf}'\neq \mathit{err}\land \mathit{cnf}'\vDash
        Q. $$
    Similarly, a function contract $c=(P, Q)$ is
    \textit{sound} for $\mathit{body}_f$, written
    $\{P\}\ \mathit{body}_f\ \{Q\}$, iff 
    $\{P\}\ t\ \{Q\}$ holds for all
    traces $t=[\mathit{entry}_f,\dots,\mathit{exit}_f]$ through $\mathit{body}_f$.
\end{definition}

\paragraph{Initial \AbsState s}
The \textit{initial \absstate} $s_0$ for function $f$ has $s_0.\mathit{pre} = \texttt{true}$, $s_0.\mathit{curr}=\bigwedge
    \{x=X\mid x\in \mathit{PVar}\land X\in \mathit{AnchVar}\}$,
which denotes that each program variable has a fixed but initially unrestricted 
value (anchor) at the start of $f$.

\paragraph{Handling of \textsc{Assume}}
Following the seminal work \citep{biabd,biabd_conference} and the more recent
\citep{Broom}, we define biabductive shape analysis to split its states at branching 
points according to the branching condition.
As the literature contains sufficient explanations of this mechanism (called
assume-as-assume and assume-as-assert), we only give a brief intuition here.
If the branching condition can be expressed in terms of the function arguments, 
i.e., if the branch taken can be statically determined purely from the function 
arguments, the analysis includes the two cases into the pre-conditions of the 
resulting states.
This treatment is equivalent to handling the branching condition's \textsc{Assume} 
statements as if they were \textsc{Assert} statements instead.
Otherwise, the \absstate s for the branches have the same pre-condition, and the 
branching condition cases are only added to the corresponding post-conditions.


\section{Sound Branching Analysis with Shared Abduction}\label{sec:worlds}

\begin{wrapfigure}[10]{l}[-.025\textwidth]{0.32\textwidth}
    \vspace*{-1mm}
    \begin{lstlisting}[language=C,label=lst:branching,caption={Nested branching},numbers=left]
  int nested(node *x, node *y,node *z){
    if (?) {
      if (y != NULL) {
        return y->data;
      } else {
        return z->data;
      }
    } else {
      return x->data;
    } 
  }\end{lstlisting}
\end{wrapfigure}

As the example in \Cref{lst:mot1} is rather convoluted, we introduce the simpler
\Cref{lst:branching} to show how exactly the technique works.
There, the function \texttt{nested} loads from one of
the three pointer arguments, depending on a non-deterministic
condition $?$ on Line~2 and a deterministic one on Line 3.
Regardless of the values of the function arguments, an execution can
either take the \texttt{then} or the \texttt{else} branch of the outer
\texttt{if-then-else}.
Therefore, the original analysis simply splits the \absstate\:without abducing
any pre-condition.
In contrast, the branches of the inner \texttt{if-then-else} can be
distinguished by whether the argument \texttt{y} is initially a null pointer,
leading the analysis to abduce different pre-conditions for each branch.
Altogether, the classical biabduction-based shape analysis will find three unsound contracts for the function, one for each possible code path, similar to the following:
\begin{align*}
    (x.\texttt{data}\mapsto \ell_1 \statesep &
      \mathit{return}_\mathit{nested}=\ell_1\formsep x.\texttt{data}\mapsto \ell_1) \\
    (y\neq\Null\formsep y.\texttt{data}\mapsto \ell_2\statesep &
      y\neq\Null\land\mathit{return}_\mathit{nested}=\ell_2\formsep
      y.\texttt{data}\mapsto \ell_2) \\
    (y=\Null\formsep z.\texttt{data}\mapsto \ell_3\statesep &
      y=\Null\land\mathit{return}_\mathit{nested}=\ell_3\formsep
      z.\texttt{data}\mapsto \ell_3)
\end{align*}

As introduced in \Cref{subsec:sharemot}, our new technique overcomes this
unsoundness issue and shares requirements abduced with related \absstate s.
To guarantee that the requirements are only shared with actually related
\absstate s, we introduce so-called \emph{extended
analysis states} or \emph{worlds} for short.

\begin{definition}[Worlds]
    \emph{Worlds} comprise a shared pre-condition $P$ and multiple
    current post-conditions $Q_i^{l_i}$ at possibly different program
    locations $l_i$: $(P\statesep Q_0^{l_0}\lor \dots \lor Q_n^{l_n})$
\end{definition}

We stress the seemingly small but crucial difference between the current
post-conditions used in our notion of worlds and the previously defined
abstract states: the latter are, in general, also allowed to use disjunctions
but are missing the labeling by program locations (allowing the disjuncts to be associated with different program paths).
Moreover, worlds do not require the logic itself to contain disjunctions but merely
simulates them with its structure.

\begin{definition}[Soundness of Worlds]\label{def:soundw}
    A world $w=(P\statesep Q_0^{l_0}\lor \dots \lor Q_n^{l_n})$ is \textit{sound} for a
    trace $t=[v_0,\dots,v_n]$, written $\{P\}\ t\ \{Q_0^{l_0}\lor \dots \lor Q_n^{l_n}\}$, iff
    \[\forall \mathit{conf},\mathit{conf}'\in \conf.\ \mathit{conf}\vDash P\land
    \left(\mathit{conf},t\right)\rightsquigarrow^* \mathit{conf}' \Longrightarrow
    \mathit{conf}'\neq \mathit{err}\land \exists i.\ l_i=v_n\land
    \mathit{conf}'\vDash Q_i. \]
\end{definition}

\begin{definition}[Shared Abduction]\label{def:SharedAbduction}
    If the analysis finds a non-empty anti-frame for any of the
    world's current post-conditions $Q_i$, it is added to the shared pre-condition
    $P$ and to all other current post-conditions.
    We call this \textit{shared abduction}.
    This step is motivated by the frame rule of separation logic and works as
    follows:
    If $M$ and $F$ are the solution to the biabduction query $Q_i*\boxed{M}
    \vdash L*\boxed{F}$ where $(L\statesep R)$ is the contract of the statement
    $\mathit{st}$ that is the label of the edge $(l_i,\mathit{st},l_{i'})$, then the
    world $(P\statesep Q_0^{l_0}\lor\dots \lor Q_i^{l_i}\lor\dots\lor
    Q_n^{l_n})$ gets updated to:
    \[\left(P*\color{ACMRed}M\color{black}\statesep
    \left(Q_0*\color{ACMRed}M\color{black}\right)^{l_0}\lor \dots \lor
    \left(Q_i*\color{ACMRed}M\color{black}\right)^{l_i}\lor \dots \lor
    \left(Q_n*\color{ACMRed}M\color{black}\right)^{l_n}\lor
    \color{ACMRed}\left(F*R\right)^{l_{i'}}\color{black}\right).\]
\end{definition}

\paragraph{Analysis with Worlds}

Whereas \absstate s can be split at branching statements by simply
duplicating them and adding the respective assumptions,
world splits need to be treated differently.
The two branches must share their abduced pre-conditions if the branch taken 
cannot be determined from the initial program state.
Therefore, such a branching point with condition $c$ at a location
$l_i$ with two successor locations $l_j$ and $l_k$ for a current post-condition
$Q_i$ leads to transforming the world from $(P\statesep Q_0^{l_0}\lor\dots \lor Q_i^{l_i}\lor\dots\lor
Q_n^{l_n})$ to
$(P\statesep Q_0^{l_0}\lor\dots \lor Q_i^{l_i}\lor\dots\lor Q_n^{l_n}\lor
\color{ACMRed}(Q_i \land \cond)^{l_j}\color{black}\lor\color{ACMRed}(Q_i\land \neg \cond)^{l_k}\color{black})$,
where two new post-conditions are added to the world.

In contrast, if the branch can be determined from the initial program state,
the whole world must be split into two to ensure shared abduction works correctly.
This means that the world at the branching point is exchanged with two new worlds:
\begin{gather*}
    (P\land \color{ACMRed}\cond\color{black}\statesep (Q_0\land
    \color{ACMRed}\cond\color{black})^{l_0}\lor\dots \lor (Q_i\land
    \color{ACMRed}\cond\color{black})^{l_i} \lor\dots\lor (Q_n\land
    \color{ACMRed}\cond\color{black})^{l_n}\lor \color{ACMRed}(Q_i\land
    \cond)^{l_j}\color{black}),\\
    (P\land \color{ACMRed}\neg \cond\color{black}\statesep(Q_0\land \color{ACMRed}\neg
    \cond\color{black})^{l_0}\lor\dots \lor (Q_i\land\color{ACMRed} \neg
    \cond\color{black})^{l_i}\lor\dots\lor (Q_n\land \color{ACMRed}\neg
    \cond\color{black})^{l_n}\lor \color{ACMRed}(Q_i\land \neg
    \cond)^{l_k}\color{black}).
\end{gather*}

\begin{theorem}[Loop-free Soundness with Worlds]\label[theorem]{thm:worlds}
    Let $\ana_{B,\abs}$  return only sound contracts for functions without branching.
    Further, let $\ana_{B,\abs}'$ be the biabduction-based shape analysis obtained by
    extending $\ana_{B,\abs}$  to use worlds as its \absstate s and to apply shared
    abduction.
    Then, the contracts computed by $\ana_{B,\abs}'$ for loop-free functions
    are sound.
\end{theorem}
\begin{proof}
    See \Cref{subsec:appproofshared}.
\end{proof}

\begin{example}
    With these ideas, the function in \Cref{lst:branching} can be analyzed as
    follows.
    At the start of the function, the world is equivalent to an initial
    \absstate{}:
    $$\left(\textsf{true}\statesep (x=X\land y=Y\land z=Z)_0^1\right)$$

    We denote program locations with their respective
    lines in the listing and only show the current post-conditions with the
    highest line number for each branch.
    Furthermore, we add subscripts to identify the different current
    post-conditions and worlds uniquely.
    At the outer \texttt{if-then-else}, the current post-condition is split into
    two as the branching condition cannot be related to the function arguments
    due to non-determinism.
    We further ignore the condition in the formula as it has no further
    relevance either way.
    \[\left(\textsf{true}\statesep (x=X\land y=Y\land z=Z)_0^3\lor (x=X\land y=Y\land
    z=Z)_1^{9}\right)\]
    If the analysis chooses w.l.o.g.~to first proceed with post-condition $1$, 
    it will abduce that $X.\texttt{data}$ needs to be
    allocated and share this information with the rest of the world:
    \begin{align*}
      (\color{ACMRed}X.\texttt{data}\mapsto \ell_1\color{black}\statesep & (x=X\land
      y=Y\land z=Z\formsep\color{ACMRed}X.\texttt{data}\mapsto \ell_1\color{black})_0^3
      \\\lor\ &(x=X\land y=Y\land z=Z\land
      \color{ACMRed}\mathit{return}_\mathit{nested}=\ell_1\formsep X.\texttt{data}\mapsto
      \ell_1\color{black})_1^{11})
    \end{align*}
    Thus, the current post-condition in the \texttt{then} branch now also
    requires as a pre-condition that $X.\texttt{data}$ is allocated and will not
    be unsound due to missing this information.
    The analysis can then choose to proceed with the current post-condition $0$
    and find that it can relate the branching condition with the function
    arguments.
    Therefore, the world needs to be split, as the two cases of condition are
    expressed as part of the world's pre-condition.
    To be more precise, the world is split based on whether $Y$ is $\Null$ (with omitted anchor equalities):
    \begin{align*}
        (\color{ACMRed}Y\neq\Null\color{black}\formsep &X.\texttt{data}\mapsto \ell_1\statesep\ 
        (\dots\land
        \color{ACMRed}Y\neq\Null\color{black}\formsep X.\texttt{data}\mapsto
        \ell_1)_0^{4} \\ &\lor\ (\dots\land \mathit{return}_\mathit{nested}=\ell_1\land
        \color{ACMRed}Y\neq\Null\color{black}\formsep X.\texttt{data}\mapsto
        \ell_1)_1^{11})_0,\\
        (\color{ACMRed}Y=\Null\color{black}\formsep &X.\texttt{data}\mapsto \ell_1\statesep 
        (\dots\land \color{ACMRed}Y=\Null\color{black}\formsep
        X.\texttt{data}\mapsto \ell_1)_0^6 \\ 
        &\lor\ (\dots\land
        \mathit{return}_\mathit{nested}=\ell_1\land \color{ACMRed}Y=\Null\color{black}\formsep X.\texttt{data}\mapsto
        \ell_1)_1^{11})_1.
    \end{align*}
    The two worlds will then abduce different required pre-conditions in further
    steps and finally result in the following (simplified) contracts for the function
    \texttt{nested}:
    \begin{align*}
	( y\neq\Null\formsep  & \
        x.\texttt{data}\mapsto \ell_1*y.\texttt{data}\mapsto\ell_2\statesep \\
        (&y\neq\Null\land
        \mathit{return}_\mathit{nested}=\ell_2\formsep x.\texttt{data}\mapsto
        \ell_1*y.\texttt{data}\mapsto\ell_2) \\ \lor             \ (  &
        y\neq\Null\land
        \mathit{return}_\mathit{nested}=\ell_1\formsep x.\texttt{data}\mapsto
        \ell_1*y.\texttt{data}\mapsto\ell_2)),\\
        (        y=\Null\formsep     & \ x.\texttt{data}\mapsto
        \ell_1*z.\texttt{data}\mapsto\ell_3\statesep\\ (&y=\Null\land
        \mathit{return}_\mathit{nested}=\ell_3\formsep x.\texttt{data}\mapsto
        \ell_1*z.\texttt{data}\mapsto\ell_3) \\ \lor              \ ( &
        y=\Null\land
        \mathit{return}_\mathit{nested}=\ell_1\formsep x.\texttt{data}\mapsto
        \ell_1*z.\texttt{data}\mapsto\ell_3)).
    \end{align*}
\end{example}

\subsection{Comparison with Disjunctive Domains}

It may be tempting to consider shared abduction with worlds to be just a
disjunctive closure of conjunctive formulas used commonly in various abstract interpretation approaches.
However, when using a disjunctive closure, the symbolic
execution is typically performed independently for
each disjunct, perhaps followed by attempts to join some of the disjuncts or to
prune them away using entailment checks---as done in
\citep{predator,biabd,Broom}.
In contrast, our analysis with worlds differs in that $(1)$
the worlds are, in fact, not purely disjunctive due to a single precondition
shared by all current post-conditions in a world and due to working with sets of
worlds, $(2)$ state splits either result in two new post-conditions or two new
worlds, and $(3)$ the symbolic execution from a single disjunct
can influence all other disjuncts in the same world via shared abduction.
\section{Biabductive Loop Acceleration with Shape Extrapolation}\label{sec:extrapol}

\begin{wrapfigure}[7]{r}{39mm}
    \vspace*{-1mm}
\begin{lstlisting}[language=C,label=lst:shapeexpsimp,
                   caption={Deallocating a list}]
void free_list(node *x) {
  while (x != NULL) {
    node *aux = x;
    x = x->next;
    free(aux);
  }
}\end{lstlisting}
\end{wrapfigure}

We first introduce the central steps of our technique for a simplified setting.
In this setting, loops only have loop conditions of the form $x\neq\Null$ where
$x$ is a function parameter.
Furthermore, we assume that loops do not contain branching.
We will show how to lift these restrictions in
\Cref{subsec:shape-extrapol-more}.

We will explain the steps of our biabductive loop acceleration with the help of
the example in \Cref{lst:shapeexpsimp}, which falls into the fragment of
programs allowed in the simplified setting.
The example shows a simple loop that frees a given list node by node.
As such, the expected contract would be $\left(\ls(x,\Null)\statesep
\textsf{emp}\right)$.

\subsection{Basic Biabductive Loop Acceleration}
Whereas \Cref{alg:extrapol} describes biabductive loop acceleration
on a high level, the following paragraphs describe the main steps of 
the procedure in more detail.

\begin{algorithm}[H]
    \algsetup{indent=2em}
    \caption{Biabductive loop acceleration}
    \label{alg:extrapol}
    \begin{algorithmic}
        \REQUIRE {A function $f$ consisting of a loop $l$ with body
            $\mathit{body}_l$ and exit condition $e_l$}
        \ENSURE {A sound contract $c$ for $f$ or \textsc{Failure}}
        \STATE {$s_0.\mathit{pre}\gets\textsf{true},\quad s_0.\mathit{curr}\gets\bigwedge \{x=X\mid x \in \mathit{PVar}\}$}
        \STATE {Compute $s_1 \gets \ana_{B,\abs}(\mathit{body}_l,s_0)$}
        \STATE $(\effect_\mathit{pre}*\noshapeeffect_\mathit{pre}\statesep \effect_\mathit{curr}*\noshapeeffect_\mathit{curr}) \gets \textsc{Partition}(s_1)$
        \STATE
        \STATE {$\oldsh,\newsh\gets$
            \textsc{ShapeExtrapolation}($\effect_{\mathit{pre}},\effect_{\mathit{curr}}$)}
        \STATE {Construct $s_\mathit{inv}$ from $\oldsh$, $\newsh$, $\noshapeeffect_\mathit{pre}$, and $\noshapeeffect_\mathit{curr}$}
        \STATE {$s_2\gets \ana_{B,\abs}(body_l, s_\mathit{inv})$}
        \STATE {Check that $s_2.\mathit{curr}\vdash s_\mathit{inv}.\mathit{curr}$}
        \STATE
        \STATE {Construct $s_\mathit{final}$ from $\oldsh$, $\newsh$, $\noshapeeffect_\mathit{pre}$, and $\noshapeeffect_\mathit{curr}$}
        \RETURN $c\gets s_\mathit{final}$
    \end{algorithmic}
\end{algorithm}

\paragraph{Partitioning}

Our algorithm first analyzes a single loop iteration starting from the initial
analysis state $s_0$.
If this analysis run ends in a state $s_1$, the algorithm then continues by
determining which parts of $s_1$ describe the shape of the traversed data
structure, i.e., the traversed singly-linked lists in our simplified setting.
To this end, the algorithm partitions the candidate pre-condition as well as the
current post-condition of the state $s_1$ into subformulas
$\effect_{\mathit{pre}/\mathit{curr}}$ and
$\noshapeeffect_{\mathit{pre}/\mathit{curr}}$ such that the $\effect$ formulas contain the
\emph{transformed}\footnote{ Here ``transformed'' means the changed value of the
loop variable $x$ and the shape that is described in between the old and the new
value of $x$.} parts of the state that should be related to the shape, whereas
the \emph{remaining} parts of the state are collected in the subformulas
$\noshapeeffect$, which comprise both completely unchanged predicates as well
as changed memory locations that are not part of the shape.
This separation is done for both the pre- and current post-condition of the
state $s_1$ to capture changes to the shape of the data structure.
Some more technical details of the partitioning, which are not needed now, will
be presented in \Cref{subsec:shape-extrapolation-implementation}.

\begin{example}
    For \Cref{lst:shapeexpsimp}, the analysis finds the state
    $s_1\mathdef\ \left(X.\texttt{next}\mapsto\ell_1\statesep x=\ell_1\right)$ after
    one loop iteration.
    There, the partition of $s_1$ is trivially
    $\effect_\mathit{pre}\mathdef s_1.\mathit{pre}$ and
    $\effect_\mathit{curr}\mathdef s_1.\mathit{curr}$ as this simple loop does
    not affect anything except the traversed list.
    On the other hand, the inner loop of \Cref{lst:mot2} does not change the
    shape of the traversed list but accesses and changes further parts of
    the program state.
    As a result, the partitions are $\effect_\mathit{pre}=\effect_\mathit{curr}
        \mathdef I.\texttt{next}\mapsto\ell_1*I.\texttt{elem}\mapsto\ell_2$ and $i=I$ or $i=\ell_1$, respectively, for the
    predicates that relate to the list and $\noshapeeffect_\mathit{pre}=
        o.\texttt{wgt}\mapsto \ell_3*\mathit{sum}\mapsto\ell_4$ and
    $\noshapeeffect_\mathit{curr}=o.\texttt{wgt}\mapsto
        \ell_3*\mathit{sum}\mapsto(\ell_4+\ell_3\cdot\ell_2)$ for the ones
    relating to the context.
\end{example}

\paragraph{Invariant Construction}

The main step of our procedure is the construction of the candidate loop
invariant $s_\mathit{inv}$.
For that, we first need to find an abstract description of the shape of the
involved data structures.
The abstraction must satisfy specific properties described below that are needed
to ensure the soundness of the approach.
We call this step \emph{shape extrapolation} and provide a minimum viable heuristic
implementing it in \Cref{subsec:shape-extrapolation-implementation}.
However, we stress that this algorithm can be changed as long as the properties
in \Cref{enum:extrapol} hold.

\begin{wrapfigure}[4]{r}{70mm}
    \vspace*{-1mm}
    \begin{enumerate}
        \item $\effect_\mathit{pre} \vdash \oldsh$ and $\effect_\mathit{curr} \vdash \newsh$,
        \item $\newsh \land X = x \vdash \mathsf{emp}$,
        \item $\oldsh[a/X,b/x]*\oldsh[b/X,c/x]\vdash \oldsh[a/X,c/x]$
    \end{enumerate}
    \caption{Extrapolation properties}\label[figure]{enum:extrapol}
\end{wrapfigure}

In general, we need shape extrapolation to abstract the two subformulas
$\effect_\mathit{pre}$ and $\effect_\mathit{curr}$ to list-segment shapes
$\oldsh(X,x)$ and $\newsh(X,x)$ where the parameter $X$ denotes the first node
of the list segment and $x$ denotes the current position in the segment; we omit
the parameters $X$ and $x$ and simply write $\oldsh$ and $\newsh$ when there is
no danger of confusion.
We require $\oldsh$ and $\newsh$ to satisfy the three properties given in
\Cref{enum:extrapol}.
These conditions are generalized in \Cref{enum:extrapol2}, and the proof
of soundness (see \Cref{app:sub_proof_shape_extrapol}) shows 
that they allow to establish a loop invariant.

Intuitively, Point $(1)$ of \Cref{enum:extrapol} simply ensures that $\oldsh$ and
$\newsh$ are actual abstractions of $\effect_\mathit{pre}$ and $\effect_\mathit{curr}$,
respectively.
In addition, Property $(2)$ ensures that $\newsh$ only describes the so-far traversed
and transformed part of the list.
Thereby, if $X=x$, i.e., at the start of the loop, the so-far
traversed and transformed part of the list must be empty.
Finally, $(3)$ states that consecutive list segments may
always be composed \mbox{into one list segment.}

\begin{example} In the example in \Cref{lst:shapeexpsimp}, the shape
    $\effect_\mathit{pre}$ obtained after one loop iteration
        is extrapolated (see \Cref{alg:guess}) to the formula
        $\ls(X,\ell_1)$.
        Since $x=\ell_1$, the shape $\oldsh$
        becomes $\ls(X,x)$ after normalization.
        On the other hand, $\effect_\mathit{curr}=\textsf{emp}$ does not contain any
        spatial predicates, and so the extrapolation produces $\textsf{emp}$ as
        $\newsh$, since  the transformation of the list consists in
        deleting it -- if the list was just traversed, we would obtain
        $\ls(X,x)$.
        It is easy to verify that all properties of
        \Cref{enum:extrapol} are satisfied.
\end{example}

\paragraph{Loop Invariant Checking}

In contrast to previous analyses, we explicitly
construct a candidate loop invariant from the abstract shapes $\oldsh$ and
$\newsh$ and subsequently check whether it is sound.
The candidate loop invariant $s_\mathit{inv}$ is meant to describe an
intermediate state of the loop:
\[
    s_\mathit{inv}\mathdef\ \left(\noshapeeffect_\mathit{pre}*\oldsh*\oldsh[x/X,\Null/x]\statesep
    \noshapeeffect_\mathit{curr}*\newsh*\oldsh[x/X,\Null/x]\right).
\]
The pre-condition of this state contains two (sub-)shapes $\oldsh$ and
$\oldsh[x/X,\Null/x]$ where the first describes the already traversed list segment
starting in $X$ and ending in the current value of $x$, whereas the latter
denotes the not yet traversed part of the list starting at $x$ and ending in
$\Null$.
Due to the extrapolation Property $(3)$, the two sub-shapes
together form the full extrapolated shape $\ls(X,\Null)$.
In contrast, the post-condition also takes into account the effects of the loop
on the already traversed list segment and, thus, contains $\newsh$ instead of $\oldsh$.

To prove that $s_\mathit{inv}$ is a loop invariant, the analysis also needs to
check whether the post-condition's memory footprint is sufficient for another
loop iteration and whether it also holds after this iteration.
This is proven by analyzing another loop iteration starting from $s_\mathit{inv}$
in which the abduction of new pre-condition predicates is disallowed, thus
forcing the analysis to fail if the shapes describe an insufficient memory
footprint.

Suppose the invariant checking step successfully finishes the symbolic execution of the loop body in some state $s_2$.
In that case, this implies that the loop body can be safely executed from the state $s_\mathit{inv}$.
Next, we check whether $s_2.\mathit{curr}\vdash s_\mathit{inv}.\mathit{curr}$,
i.e. whether $s_\mathit{inv}$ is actually a loop invariant.
If the check succeeds, the shapes are sound for all loop iterations, and the
loop acceleration procedure can continue with the final step.
We specifically note that our approach requires the analysis of just a single
loop iteration plus another invariant check iteration to filter out unsound
extrapolation results in most cases, whereas previous work could not
do so without a full second analysis phase.

\begin{example}
    It trivially holds that the following state is invariant for the loop in
    \Cref{lst:shapeexpsimp}, i.e., it is sound before and after each loop
    iteration.
    $$s_\mathit{inv}\mathdef \left(\ls(X,\ell_1) * \ls(\ell_1,\Null)\statesep x=\ell_1 \formsep \ls(\ell_1,\Null)\right)$$
    As the state depicts the program at an arbitrary point of the iteration,
    it contains both the already traversed shape $\ls(X,\ell_1)$ in the
    pre-condition (which has been freed in the post-condition) and the unchanged,
    still-to-traverse shape $\ls(\ell_1,\Null)$.
\end{example}

\paragraph{Finalizing}

Lastly, the loop analysis is finalized by constructing the final state reached
after finishing the loop from the shapes $\oldsh$ and $\newsh$ as
\[
    s_\mathit{final}\mathdef
\left(\noshapeeffect_\mathit{pre}*\oldsh[\Null/x]\statesep
x=\Null*\noshapeeffect_\mathit{curr}*\newsh*\oldsh[x/X,\Null/x]\right)
\]
This state is simply obtained from $s_\mathit{inv}$ by adding the negated loop
condition $x=\Null$ and using extrapolation Property $(3)$ to
simplify the pre-condition.
If the extrapolated shape additionally satisfies the property
$\oldsh(x/X,\Null/x) \land x = \Null \vdash \mathsf{emp}$ (we call this 
Property (3.5)), which is natural for
traversing linked lists until the end, the final state can be simplified even
further.

\begin{example} For the loop in \Cref{lst:shapeexpsimp}, the freed list is 
    represented by the shape $\newsh$ being empty, making the final state 
    $s_\mathit{final}\mathdef  \left(\ls(X,\Null)\statesep x=\Null*\ls(x,\Null)\right)$.
    Since the list segment to $\Null$ satisfies Property $(3.5)$, we obtain the expected
    final state.
    In addition, this state is also the contract of the function
    \texttt{free\_list}, and thus the analysis reaches its end for this function.
\end{example}

\subsection{Lifting Restrictions on Biabductive Loop Acceleration}
\label{subsec:shape-extrapol-more}

We now explore how the restrictions introduced above can be lifted
to make biabductive loop acceleration more applicable
in practice.
We write $\overline{x}$ for an ordered list of elements $x_i$ with
$0\leq i\leq n$ for some $n$.
We denote by $f(\overline{x})$ the ordered list $\overline{y}$ where
$y_i=f(x_i)$ for $0\leq i\leq n$.

\begin{wrapfigure}[6]{l}{40mm}
    \vspace{-1mm}
        \begin{lstlisting}[language=C,label=lst:shapeexpsimp2,caption={Cyclic/to-null lists}]
void either_list(node *x) {
  node *head = x;
  while (x != NULL
    && x->next != head)
  {...} 
}\end{lstlisting}
\end{wrapfigure}

\paragraph{Extension: General Loop Conditions}

The first restriction that we lift concerns the loop condition.
We assume that the loop condition $e_l$ is of the form $e_l=\bigwedge_i e_i$
with single atomic conjuncts $e_i$ of arbitrary form.
An example of such a loop can be seen in \Cref{lst:shapeexpsimp2}, which
handles both cyclic and null-terminated lists equally.
Handling such a more general loop condition requires further adjustments to the
loop acceleration procedure.
To be able to express multiple exit conditions that relate to multiple different
variables, the algorithm needs to be able to describe the traversed shape
relative to these variables.
Hence, the shapes $\oldsh$ and $\newsh$ are now parameterized over all program
variables changed throughout the loop---namely, all variables $x$ for which
$s_1.\mathit{curr}.\Pi \nvdash x= X$.
We call the set of these variables $\changed$ and re-define the $\oldsh$
and $\newsh$ shapes as $\oldsh(\overline{X},\overline{x})$ and
$\newsh(\overline{X},\overline{x})$, respectively, where $\overline{x}$ is the
ordered list of the variables from $\changed$ that occur in $\oldsh$ and
$\newsh$, and $\overline{X}$ is the ordered list of the corresponding anchor
variables.
Below, we will use $\oldsh(\overline{a},\overline{b})$ to denote the predicate
$\oldsh(\overline{X},\overline{x})[\overline{a}/\overline{X},\overline{b}/\overline{x}]$,
i.e., the predicate obtained from $\oldsh(\overline{X},\overline{x})$ by
simultaneously substituting the variables $\overline{X}$ with $\overline{a}$,
and $\overline{x}$ with $\overline{b}$ (we will use the same notation for
$\newsh$).
Note that for lists, this is equal to setting the two parameters of the list
segment predicate to $a$ and $b$, respectively.

\begin{wrapfigure}[6]{r}{55mm}
    \begin{enumerate}
        \item $\effect_\mathit{pre} \vdash \oldsh(\overline{X},\overline{x}) \land
            \effect_\mathit{curr} \vdash \newsh(\overline{X},\overline{x})$,
        \item $\newsh(\overline{X},\overline{x}) \land \bigwedge_{x\in\changed}X = x \vdash \mathsf{emp}$,
        \item $\oldsh(\overline{a},\overline{b})*\oldsh(\overline{b},\overline{c})\vdash
                  \oldsh(\overline{a},\overline{c})$.
    \end{enumerate}
    \caption{General extrapolation properties}\label[figure]{enum:extrapol2}
\end{wrapfigure}

With this notation, we re-define the properties of extrapolation to consider the
new parameters in \Cref{enum:extrapol2}.
We further define a mapping $\exit{x}$ of variables $x\in\changed$ to
the values they can have at a loop exit.
As these values can be challenging to determine from the loop condition alone, we
restrict the map to hold only logically constant values, i.e., $\Null$ or other
program variables outside of $\changed$ (as their values stay constant
throughout the loop), and define the other entries to map to fresh logical variables instead.

Furthermore, the post-condition of the final state has to encode that any of the
loop conditions can be unsatisfied for the program to leave the loop.
This is done by taking the disjunction of the previous final state
post-condition combined with one dissatisfied loop condition (note that the
disjunction represents the world's current post-conditions):
\[
    s_\mathit{final}.\mathit{curr} \mathdef \bigvee_i \left(\neg
    e_i*\noshapeeffect_\mathit{curr}*\newsh(\overline{X},\overline{x})*\oldsh(\overline{x},
        \exit{\overline{x}})\right).
\]

\paragraph{Extension: Branching Loop Body}

Branching in loop bodies can be handled by collecting all states $s_1$ after the
first loop iteration analysis, extrapolating their shapes, and combining them if
possible into one compound shape via a join operation akin to the ones described
in \citep{predator} or \citep{RysavyLukas2024Jofb}.
Because such an operation is mostly orthogonal to the central ideas of shape
extrapolation, we refer to the literature for more details.

\paragraph{Extension: Overlapping Shape Changes}

The extension to allow for more general loop conditions can lead to problems
with shape extrapolation if the involved shapes overlap, i.e., if the new and
old memory locations to which program variables point to are the same.
This can, e.g., happen if a list is reversed (see \Cref{app:examples:listrev}).
Such cases can be detected if the new value of a program variable in 
$\changed$ is the anchor of another variable.
In the example of list reversal, the program variable tracking the reversed list
will be set to the initial value of the original list as that list's first node
becomes the last node in the reversed one.
Such an overlap would cause problems in the implementation of
\textsc{ShapeExtrapolation} presented as \Cref{alg:guess}.
To circumvent this problem, the analysis symbolically executes further additional
loop iterations to find a program state in which there is no overlap anymore, and only then
performs the extrapolation.

\paragraph{Extension: Further Loop Effects}

As depicted in \Cref{lst:mot2}, loops can not only traverse data structures but
also change the program state in arbitrary ways.
In such cases, the candidate invariant $s_\mathit{inv}$ might not be an actual
invariant, i.e. $s_2.\mathit{curr}\nvdash s_\mathit{inv}.\mathit{curr}$.
To handle such cases, we apply a join in the corresponding pure value domain of the analysis.
In the simplest case, this step exchanges the values of variables and memory
locations that are the cause of $s_2\nvdash s_\mathit{inv}$ with the value $\top$.
In the example \Cref{lst:mot2}, the value stored at \texttt{sum} after the first
iteration is $\ell_\mathit{sum}+\ell_1 \cdot\ell_w$, resulting in the points-to
predicate $\mathit{sum}\mapsto \ell_\mathit{sum}+\ell_1 \cdot\ell_w$ being a
part of $s_\mathit{inv}$.
After the second iteration, the predicate changes to $\mathit{sum}\mapsto
    \ell_\mathit{sum}+\ell_1\cdot\ell_w+\ell_2\cdot\ell_w$, which does not
entail its counterpart in $s_\mathit{inv}$.
However, by joining the two values of the memory location to
$\mathit{sum}\mapsto\top$, the entailment is ensured.
The same problem actually occurs if the initial value of a variable does not entail
its representation in the invariant, e.g., because it is set to a constant in
the loop (see \Cref{app:examples:inv}).
In this case, we also need to abstract the variable's values in
$s_\mathit{inv}.\mathit{curr}$ to $\top$, thus guaranteeing that $s_\mathit{inv}$
also holds before the first iteration.

\begin{theorem}[Soundness of Shape Extrapolation]\label{thm:loops}
    Let $\ana_{B,\abs}$  compute only sound contracts for loop-free functions.
    If \Cref{alg:extrapol} uses $\ana_{B,\abs}$, then \Cref{alg:extrapol} with all
    extensions described in this section applied to a loop
    $l$ either fails or returns a contract $(P,Q)$ such that $\{P\}\ l\ \{Q\}$.
\end{theorem}
\begin{proof}
    See \Cref{app:sub_proof_shape_extrapol}.
\end{proof}

\subsection{Shape Extrapolation}
\label{subsec:shape-extrapolation-implementation}

We now propose a concrete shape extrapolation procedure based on the principles presented above.
This procedure is supposed to be the easiest possible heuristic that suffices to
find reasonable loop invaraiants.
To this end, it follows the original idea of obtaining inductive shapes through
abstraction, but in a ``smarter'' way.

\paragraph{Partition}

The initial partitioning is one of the most crucial steps for our shape
extrapolation procedure.
The $\effect$ formulas are built by collecting all predicates that describe the
shape traversed, i.e., the shape between the anchors and the new values of the
variables in $\changed$.
These shapes contain all transitively reachable predicates, where reachability
is defined spatially.
Thereby, a predicate is reachable if there exists a sequence of points-to and
list predicates that pairwise overlap in their source/drain variables, modulo
variable equalities.
For example, if $x\in\changed$, then $\ell_1=\ell_4\formsep
X\mapsto\ell_1*ls (\ell_1,\ell_2)*\ell_4.\mathit{data}\mapsto\ell_3$ contains
only predicates reachable from the anchor $X$.

\paragraph{Transformation Map}

In addition to the partitioning of variables, our concrete shape extrapolation
algorithm also needs to know the new value of the variables in $\changed$.
We encapsulate this information in the \textit{transformation map}
$\transfm$ which maps $x\in \changed$ to $\ell_x\in\mathit{LVar}$
such that $s_1.\mathit{curr}.\Pi_P\vdash x=\ell_x$.
Recall that, due to the normal form of \emph{SL}, every
program variable only occurs in a single equality such as $x=\ell_1$, and so
$\transfm$ can be computed by simply comparing their values before and
after the loop.

\paragraph{Shape Extrapolation}

\Cref{alg:guess} gives a detailed description of our concrete shape
extrapolation procedure.
It computes $\oldsh$ (and $\newsh$) by first extrapolating the corresponding
$\effect_i$ into two copies $\effect^1_i$ and $\effect^2_i$.
These two copies are supposed to represent the shape accessed by two consecutive
loop iterations via an intermediate, fresh auxiliary location.
The resulting formulas are then combined via separating conjunctions and
abstracted by the abstraction function $\abs$ to form the abstract shapes
$\expast$, which in turn get parameterized by renaming schemas to the
\mbox{final $\oldsh$ and $\newsh$.}
The use of two copies is a heuristic that has proven to be reliable in making
the abstraction find better abstract shapes.
Note that, in \Cref{alg:guess}, we omit additional renamings of logical
variables for clarity, as these only help to guide the abstraction but do not
affect the resulting shapes any further.

\begin{algorithm}[t]
    \algsetup{indent=2em}
    \caption{ShapeExtrapolation}
    \label{alg:guess}
    \begin{algorithmic}
        \REQUIRE {$\effect_\mathit{pre}$, $\effect_\mathit{curr}$}
        \ENSURE {Parametric extrapolated shapes $\oldsh$ and $\newsh$}
        \STATE {Compute the maps $\changed$ and $\transfm$ as described
            in the text}
        \STATE
        \COMMENT{$i\in \{\mathit{pre},\mathit{curr}\}$, $x'$ fresh}
        \STATE {$\effect^1_i\gets \effect_i[x'/\transfm(x)\mid x\in \changed]$}
        \STATE {$\effect^2_i\gets \effect_i[x'/X\mid x\in\changed]$}
        \STATE {$\expast_i \gets \abs(\effect^1_i*\effect^2_i)$}
        \STATE
        \STATE {$\oldsh(\overline{x}_1,\overline{x}_2) \gets \expast_\mathit{pre}\left[\overline{x}_1/\overline{X},\overline{x}_2/\transfm(\overline{x})\right]$}
        \STATE {$\newsh(\overline{x}_1,\overline{x}_2) \gets \expast_\mathit{curr}\left[\overline{x}_1/\overline{X},\overline{x}_2/\transfm(\overline{x})\right]$}
        \RETURN $\oldsh,\newsh$
    \end{algorithmic}
\end{algorithm}

\begin{example}
    In the example from \Cref{lst:mot2}, the inner loop can be extrapolated as
    follows:
    The procedure takes the effect $\effect_\mathit{pre}\mathdef I.\texttt{elem}
        \mapsto\ell_1*I.\texttt{next}\mapsto \ell_2$ from the transformation and
    introduces the two auxiliary formulas $\effect_\mathit{pre}^1\mathdef I.
        \texttt{elem}\mapsto\ell_1*I.\texttt{next}\mapsto i'$ and $\effect_
        \mathit{pre}^2\mathdef i'.\texttt{elem}\mapsto\ell_1*i'.\texttt{next}\mapsto
        \ell_2$ where $i'$ is the auxiliary location representing the
    intermediate value of $i$.
    From these formulas, the abstraction then finds the abstract shape
    $\expast_\mathit{pre} \mathdef \ls(I,i)$.
    This abstracted shape is then the basis for the extrapolated shape $\oldsh$.
    Similarly, $\newsh$ is computed to be $\ls(I,i)$, too.
\end{example}

\subsection{Limitations}

We note that shape extrapolation is a heuristic, which is sound (see
\Cref{thm:loops}) but inherently incomplete.
The extrapolation step can fail if the partitioned information does not
suffice to find a reasonable shape, e.g., if the abstraction function cannot
find a fitting inductive shape predicate.
However, since our shape extrapolation procedure imitates the loop acceleration
procedure of the original analysis, it is guaranteed to be applicable for at 
least the same programs but in a fundamentally sound (and oftentimes faster) way.
Furthermore, we note that shape extrapolation is currently limited to list-like
data structures that are traversed linearly.
List manipulation is, however, by far the most frequent data structure pattern
in low-level code, and so we have focused our efforts on this kind of data
structures, in accordance with prior work~\citep{biabd,Broom}.
Nonetheless, we believe that shape extrapolation, which is based on the
intuition of locality, can be extended toward tree-like data structures in
future work.


\section{Implementation and Experimental Evaluation}\label{sec:impl}

\subsection{Prototype Implementation}

We have implemented our techniques as a proof-of-concept in the prototype analyzer 
Broom \citep{Broom} written in OCaml and call the resulting tool \newname.
It is available as an artifact on Zenodo \citep{brush_artifact}.
The original Broom implements a biabduction-based shape analysis with a focus on low-level
primitives and byte-precise memory management and is sound
for functions without branching \citep[see Theorem 3]{Broom}.
However, since Broom is also still a prototype that focuses more on exact
handling of complex memory manipulation than on scalability, neither Broom nor
\newname\:are able to handle large-scale code bases yet.
Our new techniques, especially shape extrapolation, improve
scalability, but \newname\:still shares most of its code with Broom and is thus
not as mature as industrial-strength tools such as Infer.

\subsection{Implementation Limitations}
As \newname\:is largely based on the source code of Broom and does not differ much
from it apart from our new techniques, they share mostly the same limitations.
On the one hand, neither tool supports recursive functions.
Similarly, they can handle neither stack allocations nor switch-case statements.
On the other hand, the logic both tools are based on contains only inductive
predicates for linked lists with parameters describing the shape of single nodes.
Therefore, the tools can, in general, not analyze programs containing other
inductive data structures.

Furthermore, we remark that the running times of Broom and \newname\:are much
higher than for comparable tools, which in part is due to the need for precise
pointer arithmetic.
This precision is achieved, among other things, by calling an SMT solver, which is more costly
for simpler cases than using native solvers, such as in Infer.

\subsection{Case Study}

We have conducted experiments with two research questions in mind: $(1)$ whether
\newname\:can handle new use cases that existing tools cannot handle; $(2)$
whether \newname\:is also at least as efficient as Broom or would even
improve scalability.

\paragraph{Qualitative Experiments}

To answer research question $(1)$, we ran all four analyzers on selected
examples that are either presented in \citep[Table~1]{Broom}, are a part of the
test suite for Broom, or are hand-crafted test cases for shared abduction and
shape extrapolation.
The results of our case study are depicted in \Cref{tab:handle}.
All test files are included in the accompanying material.
We primarily investigated whether the analyzers found the expected bugs and sound
contracts or whether they report other spurious errors.

We note that the biabduction-based shape analysis that Infer was based on is deprecated nowadays, and Infer's focus has shifted from over-approximation to
under-approx-imation (see \citep{isl,ISLX}).
Due to this, we not only compare \newname\:with the release
v1.1.0\footnote{Available at
\url{https://github.com/facebook/infer/releases/tag/v1.1.0}.} which was also
used for comparison in \citep{Broom}, but also with its predecessor tool
Abductor.\footnote{Available at
\url{http://www0.cs.ucl.ac.uk/staff/p.ohearn/abductor.html}.}
We excluded the second-order biabduction tool S2 from our experiments since it
is quite limited and cannot be applied to most of our benchmark
programs.\footnote{Of the 73 programs in \Cref{tab:handle}, the tool reported
internal errors for 56 cases while causing segmentation faults for nine further
cases. If we only compare the 52 programs without loops, it fails for 43
instances and causes segmentation faults in 4 further cases. The internal errors
range from unsupported language features such as pointer arithmetic (3 programs)
to linker errors with unknown symbols (13 cases/6 cases without loops) and
unsupported type casts (35/30 cases). All of these cases are correctly handled
and accepted by standard C compilers as utilized as frontends by Broom and
\newname.}

\begin{table}[t]
  \caption{Examples handled correctly (\checkmark) and incorrectly ($\times$)
    by the analyzers}
    \Description{A table of examples handled by different tools (Broom, Infer, \newname)
    and which tool handles which example in a single analysis run. Our implementation
    handles all of the cases that fall in the supported fragment (modulo bugs and
    missing features).}
    \centering
  \begin{tabular}{|l|c|c|c|c||c|}
    \hline
    Class of inputs                 & $\#$ of test cases & Broom                   & Infer                   & Abductor                & \newname                \\
    \hline
    \citep[table 1]{Broom}          & 10                 & $10 \checkmark/0\times$ & $0\checkmark/10\times$  & $0\checkmark/10\times$  & $10\checkmark/0\times$  \\
    \hline
    tests from \texttt{broom/tests} & 47                 & $47 \checkmark/0\times$ & $34\checkmark/14\times$ & $12\checkmark/35\times$ & $47 \checkmark/0\times$ \\
    \hline
    \texttt{*\_branches.c}          & 2                  & $0\checkmark/2\times$   & $0\checkmark/2\times$   & $0\checkmark/2\times$   & $2\checkmark/0\times$   \\
    \texttt{nested\_*.c}            & 3                  & $1\checkmark/2\times$   & $3\checkmark/0\times$   & $3\checkmark/0\times$   & $3\checkmark/0\times$   \\
    \texttt{motivation*.c}          & 3                  & $0\checkmark/3\times$   & $0\checkmark/3\times$   & $0\checkmark/3\times$   & $3\checkmark/0\times$   \\
    \texttt{sll*.c}                 & 3                  & $3\checkmark/0\times$   & $0\checkmark/3\times$   & $0\checkmark/3\times$   & $3\checkmark/0\times$   \\
    other                           & 5                  & $3\checkmark/2\times$   & $0\checkmark/5\times$   & $3\checkmark/2\times$   & $5\checkmark/0\times$   \\
    \hline\hline
    overall                         & 73                 & $64\checkmark/9\times$  & $47\checkmark/26\times$ & $18\checkmark/55\times$ & $73\checkmark/0\times$  \\
    \hline
  \end{tabular}
  \label{tab:handle}
\end{table}

The ten tests from \citep[Table 1]{Broom} are program fragments of 30--200 LOC.
Each test consists of a set of library functions, including creation of a
linked-list, insertion and deletion of an element from the list.
Moreover, 8 of the tests also contain a top-level test harness performing
a concrete manipulation of the particular list.
There are three types of lists: (i) circular doubly-linked lists,
(ii) linux-lists taken from the Linux kernel, and (iii) intrusive lists.\footnote{Described by \citep{intrusive} and implemented in \url{https://github.com/robbiev/coh-linkedlist}.}
The 43 tests from \texttt{broom/tests} are regression tests of Broom.
Each one is usually up to 10 LOC and tests the analysis of a particular
kind of statement.

The newly added hand-crafted use cases are small-scale programs (10--70
LOC), which are, however, rather challenging for the existing analyzers.
In particular, the \texttt{*\_branches} test cases contain multiple cases of
non-determinable branching, which can lead to the unsoundness described in
\Cref{subsec:sharemot}.
The \texttt{nested\_*} programs contain multiple examples of nested loops and
nested lists.
The \texttt{sll\_*} test cases contain whole programs that create, iterate
and destroy singly-linked lists.
The \texttt{motivation} programs are as described in \Cref{sec:mot}.
Lastly, the other test cases contain programs with more complex list allocation,
deallocation, and transformation.

We specifically note that we have not used common benchmark sets such as the SV-COMP
memory-safety benchmark, as these consist primarily of closed programs and focus
on data structures that cannot be described by the logic of Broom and \newname.
Thus, these benchmark sets lie outside the scope of this work, and we have instead
used test cases that allow us to evaluate our research questions explicitly.

In the table, we use \checkmark to denote that
at least one (sound) contract was computed for each function within the particular example and
that the expected errors were reported without false positives.
On the other hand, we use $\times$ to denote that either no contract could be computed (for a
function that would have a sound contract) or the respective tool reported a false positive.

\emph{We conclude that shared abduction and shape extrapolation enable \newname\:to
  work for strictly more programs than Broom or any of the other tools.
  This is especially important, as these small but challenging test cases are
  mostly based on realistic iteration patterns that can be found in code bases
  such as the Linux kernel.}

\paragraph{Quantitative Experiments}

\begin{figure}[ht]
  \centering
  \begin{tikzpicture}
    \begin{loglogaxis}[xlabel=Runtime of Brush in sec,ylabel=Runtime of Broom in sec,enlargelimits=false,legend pos=south east]
        \addplot+[
            only marks,
            color=ACMRed,
            mark=star,
            mark size=2.9pt]
        table [x=new, y=old, col sep=comma] {result-loop.csv};
        \addplot+[
            only marks,
            color=ACMRed,
            mark=star,
            mark size=2.9pt]
        table [x=new, y=old, col sep=comma] {result-branching.csv};
        \addplot+[
            only marks,
            color=ACMRed,
            mark=star,
            mark size=2.9pt]
        table [x=new, y=old, col sep=comma] {result-other.csv};
        
        \addplot+[draw=black,mark=] coordinates
            {(0.02,0.02) (1,1) (200,200)};
        \addplot+[draw=lightgray,mark=,solid] coordinates {(0.02,0.03) (1,1.5) (200,300)};
        \addplot+[draw=lightgray,mark=,solid] coordinates {(0.02,0.01) (1,0.5) (200,100)};
        \legend{Test Case,,,Equal time, $\pm 50\%$}
    \end{loglogaxis}
\end{tikzpicture}
  \caption{Runtime of Broom and \newname\:in sec. for tests from \Cref{tab:handle} that both tools handled}
  \Description{A scatter plot of the runtime of Broom and \newname\:for all cases in
    \Cref{tab:handle} that both tools can handle. Most data points are above the
    diagonal, denoting equal runtime, showing that Broom takes longer in these cases.}
  \label{fig:scatter}
\end{figure}

In another series of experiments, we also evaluated the runtime of
\newname\:versus Broom\footnote{We used the commit
  \broomv\:found at
  \url{https://pajda.fit.vutbr.cz/rogalew/broom} with small additional bug fixes.} on test cases that both
tools can handle.
Some test cases were split to survey the runtime of single, interesting
functions without their calling context.
All experiments were run ten times on an Intel Core i7-1260P CPU with 32GiB RAM,
and we took the mean over all runs.
A plot of the results can be found in \Cref{fig:scatter}, the raw data is 
displayed in \Cref{app:data}.
The overall means for programs with loops are 1.85s for \newname\:to 38.2s for Broom,
whereas the means for loop-free branching programs are 22.3s for \newname\:and
24.8s for Broom.
The overall means are 7.24s for \newname\:to 14.0s for Broom.
These numbers show that \newname\:provides a significant speedup over
Broom.
For the examples with branching but without loops, we at least
find a tendency towards faster runtime for \newname.

We directly relate the time improvements of \newname\:with the number loop
iterations analyzed.
As seen in \Cref{tab:iter}, \newname\:only requires a fraction of loop iterations due
to shape extrapolation for all examples from \Cref{tab:handle}, that both Broom
and \newname\:can handle and which contain loops.
The mean over all examples is 3.55 iterations for \newname\:and 11 for Broom.
We note that in most cases, \newname\:requires exactly two iterations per loop,
which corresponds to the initial analysis and the invariant checking iteration
as described in \Cref{sec:extrapol}.
Only the case \texttt{reversal} (see \Cref{app:examples:listrev}) requires 3
iterations, as it reverses a list such that the new and old shapes partially
overlap.
As described in the paragraph about \emph{Extension: Overlapping Shape Changes},
the analysis requires an additional iteration to reach a state in which this
overlap has been removed.

\begin{table}[t]
  \caption{Loop iterations analyzed by the analyzers (\newname/Broom) until a
    fixed point was reached, as well as how many loops were
    present in the programs (in brackets)}
    \Description{A table of examples and how many loop iterations were
    analyzed by Broom and Brush, respectively, as well as how many loops were
    present in the programs (in brackets).}
    \centering
  \begin{tabular}{|c|c|c|c|}
    \hline
    \textttt{copy\_alloc}    & \textttt{dll-as-sll-traverse}  & \textttt{reversal} & \textttt{nested\_lists2} \\
    \hline
    2/6 (1)                 & 2/6 (1)                       & 3/8 (1)                                & 6/14 (3)                \\
    \hline
    \hline
    \textttt{sll-fst-shared} & \textttt{sll-fst-shared-alloc} & \textttt{sll-fst-shared-iter}           & \textttt{sll}            \\
    \hline
    4/12 (2)                & 2/8 (1)                       & 2/6 (1)                                & 6/20 (3)                \\
    \hline
    \hline
    \textttt{sll-alloc}      & \textttt{sll-shared-sll-after} & \textttt{sll-shared-sll-after-alloc}    & \textbf{overall}        \\
    \hline
    2/6 (1)                 & 6/21 (3)                      & 4/14 (2)                               & 39/121 (19)             \\
    \hline
  \end{tabular}
  \label{tab:iter}
\end{table}

\emph{We conclude that shared abduction improves both the precision and the 
performance of the analysis compared to Broom, while biabductive loop 
acceleration considerably improves the performance of the analysis for the benchmarks.}



\section{Related Work}\label{sec:relatedwork}

\paragraph{Biabduction-based Shape Analysis}

Our work builds on biabduction-based shape analysis introduced by \citet{biabd,biabd_conference} and later 
implemented in Infer \citep{Infer}.
Our new techniques avoid the unsoundness issues of the first phase of
the analysis and constitute significant theoretical and
practical advancements as demonstrated in \Cref{sec:worlds,sec:extrapol,sec:impl}.

The approach of \citet{biabd,biabd_conference} was extended in the Broom
analyzer \citep{Broom} by ways of handling low-level primitives and
byte-precise memory handling.
Since these extensions are orthogonal to the problems of unsoundness, our
techniques are equally effective for improving Broom as depicted in
\Cref{sec:impl}.

Another related work is the \emph{second-order biabduction} by \citet{secondorder}.
Their approach does not consider a fixed class of inductive predicates,
but discovers them as part of the analysis by instantiating
second-order variables with a technique called \textit{shape inference}.
The analysis first collects the unknown predicates with corresponding relational
assumptions and synthesizes fitting shapes in a second step.
Their computation method ensures that these shapes make the resulting contracts
sound.
Albeit this makes their technique similar to shape extrapolation,
the approach of \citet{secondorder} can handle more complex shapes of
dynamic data structures.
On the other hand, it requires solving the complex predicate inference problem
for which their tool uses a relatively simple algorithm.
In our experience, this algorithm can easily fail even for programs with simple
data structures if they are not compatible with the shape inference procedure.
Furthermore, their implementation is rather limited, as described in \Cref{sec:impl}.

Very recently, \citet{quiver} combined biabductive reasoning with auto-active,
foundational program verification.
Their tool Quiver takes C programs and specification sketches as annotations as
input, translates them into a representation in the Caesium C semantics
\citep{refinedc}, and finally infers and proves full function specifications in
the proof assistant Coq.
The central reasoning mechanism of Quiver is called \emph{abductive deductive
verification}, which is closely related to biabduction.
As Quiver does not only work with predicates for memory safety but also with a
refinement type system for C, it has a broader focus than our work.
However, Quiver requires the user to provide specification sketches, refinement types, and loop invariants, while our work focuses on fully automated, biabductive shape analysis.
\paragraph{Other Shape Analyses.}
There are many different shape analysis methods not based on
biabduction in literature.
Of these, the Predator analyzer \citep{predator} based on \emph{symbolic memory
graphs} is quite
successful with regard to the Competition on Software Verification (SV-COMP),
see \citep{SVCOMP20,SVCOMP24}.
Their approach handles abstraction, entailment, and state pruning as special
cases of a general graph joining procedure.
They focus on closed programs, and their approach uses function summaries that are
computed in a top-down fashion, following the call tree (whereas biabduction-based
shape analysis works from the bottom up).
This top-down fashion requires a re-analysis of functions for different
contexts but can ignore irrelevant code paths.
As Predator implements a classic forward analysis that does not
compute contracts with pre-conditions, Predator circumvents the problem of
unsoundness.
On the other hand, Predator only works on closed programs and can thus not be
used for modular and incremental analysis.

Another approach to shape analysis has been recently introduced by \citet{transformer,relational_shape_abstract_domain}.
They utilize \emph{transformers} to describe the effects of functions and
compute these with regard to the calling context in a top-down fashion, as in
Predator.
The transformer-based analysis is built around a transformer abstract domain and
is based on abstract interpretation.
We note that the use of transformers has partially motivated the inner workings
of our shape analysis procedure.
The authors noted that biabduction might be applicable for transformers
as well, and we strengthen this point by showing how our shape
extrapolation procedure combines both ideas to some degree.

We also like to mention the work by \citet{enea:sas13}, where the authors consider
overlaid data structures.
Their technique is based on a fragment of separation logic
that differentiates per object and per field separation.
Note that the separation logic fragment we base our work on uses per field
separation, while per-object separation is only enforced implicitly.

\paragraph{Other Related Analyses.}
Recent work about \emph{Incorrectness (Separation) Logic} (ISL)
\citep{isl,cisl,ISLX} has introduced a different approach to program analysis.
This line of work does not focus on verifying the absence of memory bugs in an 
over-approximating way but instead tries to find bugs in an under-approximative way.
Although the bug-finding ability of incorrectness logics makes them very useful
in practice, over-approximating analyses are still relevant for certification
and low-level systems software.

Lastly, the emergence of incorrectness logic has also motivated the development
of combined logic systems that inherit the benefits of both over- and
under-approximating logics.
Recent work in this direction includes \textit{Exact Separation Logic} \citep{exactsl} as well as \textit{Outcome (Separation) Logic} \citep{outcome,outcomesl}.
In particular, \emph{tri-abduction}, introduced in \citep{outcomesl}, offers an alternative solution to the branching problem we address with shared abduction.
By solving the abduction problem for the pre-conditions required by both branches, the tri-abduction operation can potentially compute better contracts than the (greedy) techniques proposed in this paper, which will first solve the abduction problem for one branch and then for the other.
This increased precision, however, comes with the burden of implementing a new operator, whereas we can simply lift existing implementations of biabduction operators to shared abduction.
We believe that the increased precision of the triabduction operation is rarely needed, and thus the more lightweight solution of shared abduction is preferable in practice.
For our experiments, the precision of shared abduction was sufficient.
At the same time, the approach by \citet{outcomesl} has not yet been 
implemented into a tool that could be used for experimental comparison.

\section{Conclusion and Future Work}

This work introduces the two novel techniques of shared abduction and
biabductive loop acceleration with shape extrapolation.
We provide soundness proofs for both techniques and implement them in our prototype analyzer \newname, which is based on the state-of-the-art analyzer Broom.
We experimentally demonstrate that these techniques enable our biabduction-based shape analysis to find sound contracts in a single analysis phase.
In particular, we show that shared abduction and shape extrapolation enable \newname\ to analyze strictly more
programs than Broom or any of the Infer versions, and to considerably improve the performance compared to Broom.

While our work is limited to non-recursive programs, we believe that shape
extrapolation can easily be extended to recursive programs (e.g., tree traversals)
and that this direction constitutes an exciting avenue for future work.
We also hope to incorporate techniques that track the content of data structures, i.e., the data values stored in the data structure.
Specifically, we would like to enrich the logic and biabduction procedure to track data values, e.g., for tracking the value of sum in the example of
\Cref{lst:mot2}.
For this, we plan to take inspiration from prior work that combines shape domains and data domains, such as the product domain studied in \citep{10.1007/978-3-031-44245-2_15}.
Another interesting direction for future research is whether the idea underlying shape extrapolation has application in the synthesis of heap-manipulating programs, e.g., as studied in \citep{journals/pacmpl/PolikarpovaS19}.


\bibliographystyle{ACM-Reference-Format}
\bibliography{onephase.bib}
\newpage
\appendix
\section{Concrete Semantics}
\paragraph{More Notation.}
Let $f$ and $g$ be partial functions.
Then $\mathit{dom}(f)$ is the domain of $f$ and
we write $f(x)=\bot$ if $x\notin \mathit{dom}(f)$.
Furthermore, we denote sets with the usual notation $\{x_0,\dots,x_n\}$ and use $\emptyset$
for the empty set.
We use the function-update syntax $f[a\hookrightarrow b]$ to denote the
(partial) function $f'$ that is defined as $f'(x)\mathdef b$ if $x=a$, and
$f'(x)\mathdef f(x)$ otherwise.
We write $f\uplus g$ for the disjoint union of two partial functions $f$ and $g$ if $\mathit{dom}(f)\cap\mathit{dom}(g)=\emptyset$, such that $\left(f\uplus g\right)(x) \mathdef f(x)\text{ if } x\in \mathit{dom}(f), g(x) \text{ if } x\in \mathit{dom}(g), \bot$ else.

The semantics of our separation logic and programming language are defined in the following.
We note that $\llbracket x\rrbracket S$ is only defined if $x\in \mathit{dom}(S)$.
However, we can assume that this condition is always satisfied as this condition can be ensured with syntactic type safety.
The semantics are shown in \Cref{fig:seplogsemantics,fig:langsem}.
We note that allocation is non-deterministic and can either succeed and return a fresh memory location or fail, not change the heap, and return $\Null$ instead.
Moreover, we also note that for pure formulas, the stack suffices to define the semantics.

\begin{figure}[H]
    \begin{gather*}
        \llbracket \Null\rrbracket S \mathdef \Null\quad \llbracket k\rrbracket S \mathdef k \quad \llbracket x\rrbracket S \mathdef S(x), \text{ if } x\in \mathit{dom}(S) \\ 
        \llbracket \textsf{unop } e\rrbracket S \mathdef \textsf{unop } \llbracket e\rrbracket S \quad \llbracket e_1 \textsf{ binop } e_2\rrbracket S \mathdef \llbracket e_1 \rrbracket S \textsf{ binop } \llbracket e_2\rrbracket S \\ \llbracket \top \rrbracket S \mathdef \text{ some value } v\in\mathit{Val}
    \end{gather*}
    \begin{align*}
        (S,H)\vDash x.f \mapsto \varepsilon        & \ \Leftrightarrow \mathit{dom}(H)=\{(\llbracket x\rrbracket S,f)\} \land H(\llbracket x\rrbracket S,f) = \llbracket\varepsilon\rrbracket S  \\
        (S,H)\vDash \Sigma_1 *\Sigma_2             & \ \Leftrightarrow \exists H_1,H_2.\ (S,H_1)\vDash \Sigma_1 \land (S,H_2)\vDash \Sigma_2                                                     \\&\land \mathit{dom}(H_1)\cap\mathit{dom}(H_2)=\emptyset \land H=H_1\uplus H_2\\
        (S,H)\vDash \ls(x,\varepsilon)              & \ \Leftrightarrow \left(\mathit{dom}(H)=\emptyset\land (S,H)\vDash x=\varepsilon\right)                                                     \\
                                                   & \ \lor x\neq \epsilon \land \exists l.\ (S[y\hookrightarrow l],H)\vDash x.\texttt{next}\mapsto y*\ls(y,\varepsilon), y\notin \mathit{dom}(S) \\
        (S,H)\vDash \textsf{emp}                   & \ \Leftrightarrow \mathit{dom}(H)=\emptyset                                                                                                 \\
        S\vDash \Pi_1\land \Pi_2                   & \ \Leftrightarrow S\vDash \Pi_1 \land S\vDash \Pi_2                                                                                         \\
        S\vDash \textsf{true}                      & \                                                                                                                                           \\
        S\vDash \varepsilon_1 \oplus \varepsilon_2 & \ \Leftrightarrow \llbracket \varepsilon_1\rrbracket S \oplus \llbracket \varepsilon_2\rrbracket S                                          \\
        (S,H)\vDash \Pi\formsep \Sigma             & \ \Leftrightarrow S\vDash \Pi\land (S,H)\vDash \Sigma                                                                                       \\
        (S,H)\vDash \varphi\lor \Delta             & \ \Leftrightarrow (S,H)\vDash \varphi\lor (S,H)\vDash \Delta
    \end{align*}
    \caption{Semantics of the separation logic}
    \Description{The semantics of our separation logic. Everything is defined as expected.}
    \label{fig:seplogsemantics}
\end{figure}
\newpage

\begin{figure}[H]
    \begin{gather*}
        \llbracket k\rrbracket S \mathdef k \quad
        \llbracket x\rrbracket S \mathdef S(x), \text{ if } x\in \mathit{dom}(S) \quad
        \llbracket ?\rrbracket S \mathdef\top \\
        \llbracket \textsf{unop } e\rrbracket S \mathdef \textsf{unop } \llbracket e\rrbracket S \quad
        \llbracket e_1 \textsf{ binop } e_2\rrbracket S  \mathdef \llbracket e_1 \rrbracket S \textsf{ binop } \llbracket e_2\rrbracket S
    \end{gather*}
    \begin{align*}
        \left(\mathit{err}, st\right)\rightsquigarrow                        & \ \mathit{err}, \text{ for any statement } st                                                                                                    \\
        \left((S,H), x\!=\!e\right)\rightsquigarrow                          & \ (S[x\hookrightarrow \llbracket e\rrbracket S],H)                                                                                               \\
        \left((S,H), x_1\!=\!*x_2.f\right)\rightsquigarrow                   & \ (S[x_1 \hookrightarrow H(\llbracket x_2\rrbracket S,f)],H), \text{ if } (\llbracket x_2\rrbracket S,f) \in \mathit{dom}(H)                     \\
        \left((S,H), x_1\!=\!*x_2.f\right)\rightsquigarrow                   & \ \mathit{err}, \text{ if } (\llbracket x_2\rrbracket S,f) \notin  \mathit{dom}(H)                                                               \\
        \left((S,H),\! *x_1.f\!=\!x_2\right)\rightsquigarrow                 & \ (S,H[(\llbracket x_1\rrbracket S,f)\hookrightarrow \llbracket x_2\rrbracket S]),\text{ if } (\llbracket x_2\rrbracket S,f) \in \mathit{dom}(H) \\
        \left((S,H),\! *x_1.f\!=\!x_2\right)\rightsquigarrow                 & \ \mathit{err},\text{ if } (\llbracket x_2\rrbracket S,f) \notin \mathit{dom}(H)                                                                 \\
        \left((S,H), \textsc{return}\ x\right)\rightsquigarrow               & \ (S[\texttt{return}_f\hookrightarrow \llbracket x\rrbracket S],H)                                                                               \\
        \left((S,H), x=\textsc{Alloc}(f_1,\dots,f_n) \right)\rightsquigarrow & \ (S[x\hookrightarrow l], H'),                                                                                                                   \\
                                                                             & \text{ where either } H' = H \text{ and } l=\Null                                                                                                \\
                                                                             & \text{ or } H'=H\uplus[(l,f_i)\hookrightarrow v_i \mid l\in\mathit{Loc}-\{\Null\},                                                               \\
                                                                             & (l,f_i)\notin \mathit{dom}(H) \text{ and } v_i\in\mathit{Val} \text{ arbitrary for } 1\leq i\leq n]                                              \\
        \left((S,H), \textsc{Free}(x) \right)\rightsquigarrow                & \ (S, H), \text{ if } \llbracket x\rrbracket S = \Null                                                                                           \\
        \left((S,H), \textsc{Free}(x) \right)\rightsquigarrow                & \ (S, H[(\llbracket x\rrbracket S,f)\hookrightarrow \bot]), \text{ for all }f \text{ s.t. } (\llbracket x\rrbracket S,f)\in \mathit{dom}(H)      \\
        \left((S,H), \textsc{Free}(x) \right)\rightsquigarrow                & \ \mathit{err}, \text{ if } \forall f.\ (\llbracket x\rrbracket S,f)\notin \mathit{dom}(H)                                                       \\
        \left((S,H), \textsc{assume}(x_1\oplus x_2)\right)\rightsquigarrow   & \ (S,H), \text{ if } \llbracket x_1\rrbracket S \oplus \llbracket x_2\rrbracket S \text{ holds}                                                  \\
        \left((S,H), \textsc{assert}(x_1\oplus x_2)\right)\rightsquigarrow   & \ (S,H), \text{ if } \llbracket x_1\rrbracket S \oplus \llbracket x_2\rrbracket S \text{ holds}                                                  \\
        \left((S,H), \textsc{assert}(x_1\oplus x_2)\right)\rightsquigarrow   & \ \mathit{err}, \text{ if } \llbracket x_1\rrbracket S \oplus \llbracket x_2\rrbracket S \text{ does not hold}                                   \\
        \left((S,H),x=f(x_1,\dots,x_n)\right)\rightsquigarrow                & \ \mathit{err},                                                                                                                                  \\&\text{ if } \mathit{err}\in f_{(S_f,H)}\text { where } S_f=[a_i \hookrightarrow \llbracket x_i\rrbracket S\mid a_i \text{ argument of } f]\\
        \left((S,H),x=f(x_1,\dots,x_n)\right)\rightsquigarrow                & \ (S[x\hookrightarrow \llbracket\texttt{return}_f\rrbracket S'_f],H'),                                                                           \\&\text{if } (S'_f,H') \in f_{(S_f,H)}\land \mathit{err}\notin f_{(S_f,H)},\\
                                                                             & \ \text{where }S_f=[a_i \hookrightarrow \llbracket x_i\rrbracket S\mid a_i \text{ argument of } f],                                              \\
        f_{(S,H)} \mathdef                                                   & \ \{C\mid \exists t_f.\ \left((S,H),t_f\right)\rightsquigarrow^* C \text{ with } t_f= [\mathit{entry},\dots,\mathit{exit}] \}
    \end{align*}
    \caption{The semantics of the programming language}
    \Description{The semantics of our programming language. We define the semantics of each basic statement, traces and full functions.}
    \label{fig:langsem}
\end{figure}
\newpage

\section{Atomic Contracts}\label{app:ctxt}
Based on the introduced separation logic and programming language with their respective semantics, we define the basic contracts of atomic statements.
We note that the contracts are all trivially sound and suffice for a soundness-preserving frame rule.
For simplicity of presentation, we assume that compound expressions are built stepwise with intermediate variables as in SSA form, such that $y=x*10-2$ would become $c_{10}=10;x'=x*c_{10};c_2=2; y=x'-c_2$.
Moreover, we state the contracts for the statements as if these were represented by corresponding functions with only the single statement as body.
This approach follows the presentation in \citep{Broom}.
\begin{align*}
    \{y=Y\}\ &x=y\ \{x=Y\land y=Y\}\\ \{\mathit{emp}\}           & \ x=k\ \{x=k\}\\ \{\mathit{emp}\}\ &x=\ ?\ \{x=\ell_1\}, \text{ where } \ell_1\text{ is fresh} \\
    \{y=Y\}                                                        & \ x=\textsc{unop}\ y\ \{y=Y\land x=\textsc{unop}\ y\}                                            \\ \{y=Y\land z=Z\}&\ x=y\ \textsc{binop}\ z\ \{y=Y\land z=Z\land x=y\ \textsc{binop}\ z\}\\
    \{y=Y\formsep Y.f\mapsto Z\}                                   & \ x=*y.f\ \{y=Y\land x=Z\formsep Y.f\mapsto Z\}                                                  \\
    \{x=X\land y=Y\formsep X.f\mapsto Z\}                          & \ *\!x.f=y\ \{x=X\land y=Y\formsep X.f\mapsto Y\}                                                \\
    \{x=X\}                                                        & \ \textsc{return}\ x\ \{x=X\land \mathit{return}_f=X\}, \text{in function } f                    \\
    \{x=X\land y=Y\}                                               & \ \textsc{assume}(x\oplus y)\ \{x=X\land y=Y \land X\oplus Y\}                                   \\
    \{x=X\land y=Y\land X\oplus Y\}                                & \ \textsc{assert}(x\oplus y)\ \{x=X\land y=Y \land X\oplus Y\}                                   \\
    \{x=X\formsep X.f_1\mapsto X_1* \dotso *X.f_n\mapsto X_{n}\}\  & \textsc{Free}(x)\ \{x=X\}                                                                        \\
    \{x=\Null\}\                                                   & \textsc{Free}(x)\ \{x=\Null\}
\end{align*}
\begin{align*}
    \{\mathit{emp}\} & \ x=\textsc{Alloc}(f_1,\dots,f_n)\ \{(x=\ell_x\formsep \ell_x.f_1\mapsto \ell_1*\dotso*\ell_x.f_n\mapsto \ell_n) \lor (x=\Null)\}, \\ &\text{ where } \ell_1,\dots\ell_n,\ell_x \text{ are fresh}\\
\end{align*}

\section{Branching in Biabductive Shape Analysis}\label{app:assume}
Following the seminal work by \citet{biabd,biabd_conference}, we introduce
two modes of handling \textsc{Assume} statements to enhance precision of the
analysis and make it more path-sensitive.
This path-sensitivity has proven to be useful as it allows to precisely locate
bugs in the program that make verification impossible.

First, if the condition $\cond$ can be expressed as part of the
pre-condition $P$, i.e. if the variables in $\cond$ can be
reached from the anchor variables through propositions in $P$ and
$\cond$, we can apply the \textit{assume-as-assert} mode.
As an example, if the function has arguments $a$ and $b$ and the condition is
$b==\Null$ under the current state $b=B$, \textit{assume-as-assert} trivially
applies, as the condition can be evaluated just from knowing the anchor value $B$.

This mode handles \textsc{Assume}(\textit{cond}) as if it was an \textsc{Assert}
and thus not only adds the assumption $\cond$ into the post-condition
$Q$ but also adds it to the pre-condition $P$, i.e., $Q$ is updated to
$\cond * Q$ and $P$ to $\cond * P$ (modulo renaming of variables).

If the condition cannot be expressed as part of the pre-condition,
the analysis falls back to the \textit{assume-as-assume} mode.
As an example, if the condition is $i>0$ where $i$ is obtained from user input,
the condition cannot be evaluated just from knowing the function arguments, i.e.,
it is not expressible in terms of the anchor variables.

As we assume all vertices in CFGs to have at most two successors, branching
points such as for an \texttt{if-then-else} are vertices with two outgoing
edges, each annotated with an \textsc{assume} statement, such that the
conditions are negations of each other.
In the case of \textit{assume-as-assert}, the current \absstate{} would
have both contradicting conditions added to its candidate pre-condition.
To circumvent this problem, we require the analysis to do \textit{state
    splitting}, i.e., instead of introducing inconsistencies at
\textit{assume-as-assert} branching points, it instead continues with one
copy of its current \absstate\ for each branch.
Moreover, to not lose precision unnecessarily, we assume the shape analysis
to also split its \absstate\ at \textit{assume-as-assume} such that the two
resulting states share the same pre-condition.

\section{Proofs}\label{app:proofs}
\paragraph{Further Notation.}
We define $[v_0,st_1,\dots,st_n,\mathit{entry}_g,g^m], m\in \mathbb{N}$ to be a family of traces that share a partial trace $[v_0,st_1,\dots,st_n,\mathit{entry}_g]$ and then take an arbitrary path through $m$ repetitions of the (partial) CFG $g$ such that $\mathit{entry}_g = \mathit{exit}_g$ for all but the last repetition.
Based on this, we define $[v_0,st_1,\dots,st_n,\mathit{entry}_g,g^*]$ to be the family of all traces $[v_0,st_1,\dots,st_n,\mathit{entry}_g,g^m]$ for $m\geq 0$.

\begin{lemma}\label{thm:conseq}
    The rule of consequence for Hoare triples:
    \[
        \frac{P\vdash P'\quad \{P'\}\ t\ \{Q'\}\quad Q'\vdash Q}{\{P\}\ t\ \{Q\}}
    \]
\end{lemma}
\begin{proof}
    The rule follows directly from the definitions of entailment (see \Cref{def:seplog}) and Hoare semantics (see \Cref{def:soundst}).
\end{proof}

\subsection{Shared Abduction}\label{subsec:appproofshared}

For worlds where the exact pre- and post-conditions are not relevant, we write $(P,Q)\soundw t$ instead of $\{P\}\ t \{Q\}$.

\begin{lemma}\label{thm:soundlearn}
    Let $\ana_{B,\abs}$  be a biabduction-based shape analysis based on worlds that computes only sound contracts for functions without branching.
    Then, if $\forall t\in T. \ w\soundw t$ holds and the world $w$ is transformed to $w'$ by a shared abduction step for current post-condition $Q_i^{l_i}\in w.\mathit{curr}$ along the edge $(l_i,st_i,l_i+1)$ in the CFG, it follows that $\forall t\in T'.\ w'\soundw t$ where $T'\mathdef T\cup T_\mathit{new}$ and $T_\mathit{new}\mathdef \{[t,l_i,st_i,l_i+1]\mid [t,l_i]\in T\land (w.\mathit{pre}, Q_i)\soundw [t,l_i]\}$.
\end{lemma}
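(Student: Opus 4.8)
The plan is to reduce world soundness to the soundness of simple analysis states via \Cref{thm:world_simple_sound} and then verify the claim separately for the traces in $T$ and in $T_\mathit{new}$, using the frame rule for the former and the soundness of learning for the latter. Throughout, write $w=(P,\bigvee_j Q_j^{l_j})$ with $P=w.\mathit{pre}$, so that after the shared learning step $w'=(P*M,\bigvee_j (Q_j*M)^{l_j}\lor (F*R)^{l_{i+1}})$, where $M,F$ solve $Q_i*\boxed{M}\vdash L*\boxed{F}$ for the contract $(L,R)$ of $st_i$. The single fact I would invoke repeatedly is that learning guarantees $\mathit{var}(M)\subseteq \mathit{LVar}$ (condition (1) of \Cref{def:learning}), so that $\mathit{vars}(M)\cap\mathit{changed}(t)=\emptyset$ for every trace $t$, which is exactly the side condition of the frame rule.

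For a trace $t\in T$, \Cref{thm:world_simple_sound} gives an index $j$ with $(P,Q_j)\soundst t$ and $t$ ending in $l_j$, i.e. $\{P\}\ t\ \{Q_j\}$. Since the context is frame-preserving and $\mathit{vars}(M)\cap\mathit{changed}(t)=\emptyset$, the frame rule yields $\{P*M\}\ t\ \{Q_j*M\}$. As $(Q_j*M)^{l_j}$ is one of the disjuncts of $w'.\mathit{curr}$ and $t$ still ends in $l_j$, \Cref{thm:world_simple_sound} gives $w'\soundw t$. Intuitively, this is the step where it matters that shared learning conjoins $M$ to \emph{all} postconditions and not only to $Q_i$: framing $M$ onto each old triple keeps every previously reached branch sound under the strengthened precondition $P*M$.

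For a trace $t=[t',l_i,st_i,l_{i+1}]\in T_\mathit{new}$, the definition of $T_\mathit{new}$ provides $[t',l_i]\in T$ together with $(P,Q_i)\soundst [t',l_i]$, i.e. $\{P\}\ [t',l_i]\ \{Q_i\}$. Here I would appeal to the soundness of a single learning step already observed after \Cref{def:learning}: because $M,F$ solve $Q_i*\boxed{M}\vdash L*\boxed{F}$ for the sound contract $(L,R)$ of $st_i$, the frame rule implies $\{P*M\}\ [t',l_i,st_i,l_{i+1}]\ \{R*F\}$. Since $(F*R)^{l_{i+1}}$ is the freshly added disjunct of $w'.\mathit{curr}$ and $t$ ends in $l_{i+1}$, \Cref{thm:world_simple_sound} again yields $w'\soundw t$, completing this case.

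The only genuine obstacle, which the above handling isolates, is discharging the frame-rule side condition uniformly for all traces; this is precisely why condition (1) of \Cref{def:learning}, restricting $M$ to logical variables, is indispensable, and why the antiframe may be safely shared across branches that assign to different program variables. The soundness of the contract $(L,R)$ of $st_i$ needed in the second case is inherited from the atomic contracts of the context (which are sound by construction) or, for a call statement, from the assumption that $\ana_B$ returns sound contracts for the already-analysed callee; hence no additional work is required there.
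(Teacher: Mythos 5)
Your proposal is correct and follows essentially the same route as the paper's proof: a case split between the old traces in $T$ (handled by framing the antiframe $M$ onto each existing triple, with the side condition discharged by $\mathit{var}(M)\subseteq\mathit{LVar}$) and the single new trace in $T_\mathit{new}$ (handled by the soundness of the learning step, i.e., the frame rule applied to the contract $(L,R)$ of $st_i$). If anything, your uniform treatment of all disjuncts $Q_j$ (rather than only $j\neq i$ as in the paper's second case) is slightly cleaner, but the substance is identical.
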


\begin{proof}
    We take a fixed but arbitrary world $w$ with $n$ current post-conditions and a family of traces $T$ such that $\forall t\in T.\ w\soundw t\ (1)$.
    We also fix a trace $t_i$ ending in $l_i$ and assume that the analysis does an analysis step for $Q_i^{l_i}\in w.\mathit{curr}$ along the edge $(l_i,st_i,l_i+1)$.
    There, $B$ finds the antiframe $M$ and frame $F$ as the solution to the biabduction query for a contract $(L,R)$ of $st_i$ and updates the world to $w'$ with the new current post-condition $(F*R)_{n+1}^{l_i+1}$.
    If $M\not\subseteq w.\mathit{pre}*M\downarrow_\mathit{AnchVar}$, the analysis fails as it has found a requirement about local variables that is impossible to fulfill.
    Thus, we can assume $M\subset w.\mathit{pre}*M\downarrow_\mathit{AnchVar}$.
    Further, if the constraints in $M$ contradict with $w.\mathit{pre}$, the analysis also fails.
    As this case is trivial, we assume that $w.\mathit{pre}*M$ is satisfiable.
    We now do a case analysis on the traces in $T'$ to show $\forall t\in T'.\ w'\soundw t$.

    In the first case, we take an arbitrary but fixed trace $[t,l_i,st_i,l_i+1] \in T_\mathit{new}$ such that $[t,l_i]\in T\land (w.\mathit{pre}, Q_i)\soundw [t,l_i]$.
    From the definition of the biabductive symbolic execution, it follows that also $(w.\mathit{pre}*M,Q_i*M)\soundw [t,l_i]$.
    By the frame rule of separation logic, it further follows that $(w.\mathit{pre}*M\statesep F*R)\soundw [t,l_i,st_i,l_i+1]$.
    Because $(w.\mathit{pre}*M\statesep F*R)$ is the equivalent \absstate\:to the newly added current post-condition of $w'$, $w'\soundw [t,l_i,st_i,l_i+1]$ holds by construction.

    In the second case, we take an arbitrary but fixed trace $t\in T$ for a post-condition $Q_j$ such that $j\neq i\land (w.\mathit{pre}\statesep Q_j)\soundw t$.
    By computing $w'$, the $Q_j$ gets updated to $Q_j*M$.
    Because $M$ does not contain any variables $x\in \mathit{PVar}$, by the soundness of the frame rule, it holds that $(w.\mathit{pre}*M\statesep Q_j*M)\soundw t$.
    Thus, $w'\soundw t_j$ is guaranteed to still hold after the shared abduction step.
\end{proof}

Proof of \Cref{thm:worlds}:
\begin{proof}
    By assumption, the analysis with $\ana_{B,\abs}$  already produces only sound contracts for branching-free functions.
    This property remains even with worlds, as worlds do not differ from \absstate s in the case of branching-free functions.
    Furthermore, for branching programs, the computed worlds will have correct pre- and post-conditions if no shared abduction happens as the world is then again the same as a set of \absstate s with the same pre-condition.
    As a result, the analysis can only produce an unsound contract if the shared abduction step breaks soundness.
    However, by \Cref{thm:soundlearn}, shared abduction preserves soundness for traces $t\in T$ unrelated to the most recent symbolic execution step.
    Furthermore, shared abduction also guarantees that the symbolic execution step also preserves soundness for any equivalent extended trace $t\in T_\mathit{new}$.
    As a result, any contract computed by $\ana_{B,\abs}'$ is by construction sound.
\end{proof}

\subsection{Shape Extrapolation}\label{app:sub_proof_shape_extrapol}

Proof of \Cref{thm:loops}:
\begin{proof}
    We take an arbitrary but fixed loop $l$, its corresponding loop function $f_l$, and apply \Cref{alg:extrapol} to it.
    If the procedure does not fail, we call the resulting contract $c$ with pre-condition $P$ and post-condition $Q$.
    Because the procedure did not fail, it has computed the two extrapolated shapes $\oldsh$ and $\newsh$.
    Based on these both the invariant \absstate\:$s_\mathit{inv}$ and the final \absstate, that has been used to derive $c$, can be constructed.

    We show that $c$ is sound for $f_l$ by arguing that it can be verified via Hoare style.
    This suffices as Hoare style semantics coincide with the soundness of contracts as defined in \Cref{def:soundst}.
    For a Hoare style verification of $f_l$ the following three steps suffice:
    \begin{enumerate}
        \item Show that the pre-condition $P$ (which is equivalent to $s_\mathit{final}.\mathit{pre}$) entails the invariant $s_\mathit{inv}.\mathit{curr}$
        \item Show that $s_\mathit{inv}.\mathit{curr}$ is actually an invariant for $l$
        \item Show that under the assumption $\bigvee_i\neg e_i$, i.e. if the program leaves the loop, the invariant $s_\mathit{inv}.\mathit{curr}$ entails the final \absstate\:$s_\mathit{final}.\mathit{curr}$ which is equivalent to $Q$
    \end{enumerate}

    \paragraph{Proof of $(1)$.}
    First, we show $P\vdash s_\mathit{inv}.\mathit{curr}$.
    We assume a fixed, but arbitrary configuration \textit{conf} such that $\mathit{conf}\vDash P\ (1)$.
    In $P$ all program variables implicitly have their initial values, i.e. $x=X$ for $x\in \mathit{PVar}$.
    Thus, from $(1)$ it follows that $\mathit{conf}.S(x)=\mathit{conf}.S(X)$.
    Due to condition $(3)$ of shape extrapolation, this further means that the full shape $p(\overline{X},\textsc{Exit}(\overline{x}))$ that occurs in $P$ is equivalent to $q(\overline{X},\overline{x})*p(\overline{x},\textsc{Exit}(\overline{x}))$ as the shape represented by $q$ is empty in \textit{conf}.

    In general, it is not guaranteed that $\noshapeeffect_\mathit{pre}\vdash \noshapeeffect_\mathit{curr}$.
    However, since $\noshapeeffect_\mathit{curr}$ can only differ from $\noshapeeffect_\mathit{pre}$ in changed memory locations that are not part of the shape or changed pure variables, we can reduce the entailment to an entailment of these changed parts.
    Due to $s_\mathit{inv}$ being a loop invariant, we further know that the values of the changed memory location in $\noshapeeffect_\mathit{curr}$ has been abstracted.
    Similarly, as described in \Cref{subsec:shape-extrapol-more}, the changed
    pure variables have also been abstracted.
    As  a result, the entailment has to hold.
    As $\noshapeeffect_\mathit{pre}$ is satisfied by \textit{conf}, $\noshapeeffect_\mathit{curr}$, which is part of $s_\mathit{inv}.\mathit{curr}$, is also satisfied by \textit{conf}.
    In conclusion, \textit{conf} satisfies all parts of $s_\mathit{inv}.\mathit{curr}$, i.e.,  $\mathit{conf}\vDash s_\mathit{inv}.\mathit{curr}$ holds.
    Thus, $P\vdash s_\mathit{inv}.\mathit{curr}$.

    \paragraph{Proof of $(2)$.}
    The second step is guaranteed to hold by the second iteration and the invariant
    check.

    \paragraph{Proof of $(3)$.}
    Lastly, we show $s_\mathit{inv}.\mathit{curr}*\bigvee_i\neg e_i \vdash s_\mathit{final}.\mathit{curr}$.
    This trivially holds due to De Morgan's laws and the definition of $s_\mathit{final}.\mathit{curr}$.
\end{proof}

\begin{theorem}\label{thm:extrapolstep}
    Let $\ana_{B,\abs}$  be a biabduction-based shape analysis.
    If $\ana_{B,\abs}$  computes sound contracts for all loop-free functions and \Cref{alg:guess} constructs extrapolated shapes $p$ and $q$ for $s_\mathit{inv}$ and the second iteration of $\mathit{body}_l$ via analysis with $\ana_{B,\abs}$  successfully results in a \absstate\:$s_2$, then $\{s_\mathit{inv}'.\mathit{curr}\}\ \mathit{body}_l\ \{s_\mathit{inv}'.\mathit{curr}\}$ holds where $s_\mathit{inv}'$ is the state after applying value abstraction.
\end{theorem}
\begin{proof}
    We assume, that both the shape extrapolation and second iteration have succeeded.
    Then, we know that $\{s_\mathit{inv}.\mathit{curr}\}\ \mathit{body}_l\ \{s_2.\mathit{curr}\}\ (1)$ by our assumption about $\ana_{B,\abs}$ .
    Due to \Cref{thm:conseq} and the definition of value abstraction guaranteeing $s_2.\mathit{curr}\vdash s_\mathit{ind}.\mathit{curr}$, we further get $\{s_\mathit{inv}.\mathit{curr}\}\ \mathit{body}_l\ \{s_\mathit{ind}.\mathit{curr}\}$.

    We take a fixed, but arbitrary configuration $\mathit{conf}$ and trace $t\in \mathit{body}_l$ through the loop body $\mathit{body}_l$, such that $\mathit{conf}\vDash s_\mathit{inv}.\mathit{curr}$.
    Then there must have been a configuration $\mathit{conf}'$ with $\left(\mathit{conf},t\right)\rightsquigarrow^*\mathit{conf}'$ and $\mathit{conf'}\vDash s_\mathit{ind}.\mathit{curr}\ (2)$ due to $(1)$.

    Due to our assumptions of $s_1$ being the only post-state after the first loop iteration, $t$ has to be the only trace through the loop body.
    This also means, that the exact values of variables and memory locations do not matter for the trace.
    To be more precise, any configuration $\mathit{conf}''$ such that $\mathit{conf}''\vDash s_\mathit{ind}.\mathit{curr}$ would also satisfy $s_\mathit{inv}.\mathit{curr}$ up to the abstracted values.
    Because these values do not influence the trace and because any changes to them throughout the loop body still satisfy the abstracted version in $s_\mathit{ind}.\mathit{curr}$, it also follows that $\{s_\mathit{ind}.\mathit{curr}\}\ \mathit{body}_l\ \{s_\mathit{ind}.\mathit{curr}\}$ needs to hold.
\end{proof}

Due to \Cref{thm:loops} and the assumption about $\ana_{B,\abs}$, the following corollary holds trivially:
\begin{corollary}[Sound Analysis for all Functions]\label{thm:funs}
    Let $\ana_{B,\abs}$  be a biabduction-based shape analysis.
    If $\ana_{B,\abs}$  computes sound contracts for all loop-free functions
    and \Cref{alg:extrapol} used this analysis procedure, then by extending $\ana_{B,\abs}$  with shape extrapolation via \Cref{alg:extrapol} for handling loops every found contract for any function is sound.
\end{corollary}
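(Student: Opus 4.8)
The plan is to prove the corollary by induction on the depth of each function in the program's call graph. Recall that, by the non-recursion assumption of \Cref{sec:prelim}, the call graph is a directed acyclic graph, so the call depth $d(f)$ --- the length of the longest chain of calls starting at $f$ --- is a well-defined natural number for every function $f$. Recall further that every function is either loop-free or consists of a single loop whose header is its entry, and that nested loops have been extracted into their own functions during pre-processing; consequently the body of a single-loop function is acyclic and treats every inner loop as an ordinary callee. The bottom-up analysis therefore only ever applies $\ana_B$ to loop-free code (either a loop-free function directly or an acyclic loop body) and to the shape-extrapolation procedure of \Cref{alg:extrapol} on single-loop functions, in both cases using the already-computed contracts of callees.

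For the base case $d(f)=0$, the function $f$ calls no other function. If $f$ is loop-free, its contract is sound directly by the hypothesis that $\ana_B$ computes sound contracts for loop-free functions (there are no callee contracts to rely on). If $f$ is a single loop $l$, its body is acyclic and calls nothing, so \Cref{thm:loops} applies verbatim: \Cref{alg:extrapol} either fails --- in which case $f$ yields no contract and there is nothing to prove --- or returns a contract $c$ with $c\soundc l$.

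For the inductive step I would fix $f$ with $d(f)=k>0$ and assume the claim for all functions of strictly smaller depth. Every function called inside $f$ has depth strictly less than $k$, so by the induction hypothesis all of their contracts are sound. The key observation --- and the step I expect to require the most care --- is to read the hypothesis ``$\ana_B$ computes sound contracts for all loop-free functions'' as the conditional statement it really is in a modular, bottom-up analysis: given sound contracts for the callees, $\ana_B$ produces a sound contract for a loop-free function. Under this reading, if $f$ is loop-free then substituting the (now sound) callee contracts makes $\ana_B$'s contract for $f$ sound. If $f$ is a single loop $l$, the same reasoning shows $\ana_B$ is sound on the acyclic loop body $\mathit{body}_l$ viewed as a loop-free fragment, which is exactly the premise \Cref{thm:loops} requires; \Cref{thm:loops} then gives that \Cref{alg:extrapol} either fails (no contract, nothing to prove) or returns $c$ with $c\soundc l$.

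The main obstacle is thus not the induction skeleton but making precise the interface between the black-box assumption on loop-free functions and the inductive hypothesis: one must verify that the contracts plugged in for called functions during the loop-free (or loop-body) analysis are precisely those proved sound at lower call depths, and that failure of any callee (for instance, an extracted inner loop on which \Cref{alg:extrapol} gives up) propagates to a failure of the enclosing analysis rather than to an unsound contract. Once this bookkeeping is in place, the corollary follows, as the surrounding text indicates, essentially immediately from \Cref{thm:loops} together with the base-case soundness for loop-free functions.
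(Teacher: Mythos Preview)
Your proposal is correct and follows exactly the approach the paper indicates: the text in \Cref{sec:extrapol} states that the result follows ``by induction on the depth of the function call graph,'' and the appendix simply notes that the corollary holds trivially from \Cref{thm:loops} together with the assumption on $\ana_B$. Your write-up is a faithful unpacking of that one-line justification, including the case split loop-free versus single-loop at each depth and the observation that failure of \Cref{alg:extrapol} yields no contract rather than an unsound one.
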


\section{Detailed Data}\label{app:data}

The plots in \Cref{fig:scatter2} show a break down of the 
results for the test cases in which our new techniques are especially applicable.
The plot labeled with ``Loops'' shows only the results for programs containing
loops for which shape extrapolation is applicable,\footnote{This specifically
excludes the \texttt{predator-test-0156-no-include} test case, which has a loop
condition that can currently not be handled by our heuristic.} whereas the plot labeled
with ``Branching'' is obtained from programs with non-loop branching for which
shared abstraction can be applied.

\begin{figure}[t]
    \centering
    \begin{tikzpicture}
      \begin{loglogaxis}[xlabel=Runtime of Brush in sec,ylabel=Runtime of Broom in sec,enlargelimits=false,legend pos=south east]
          \addplot+[
              only marks,
              color=ACMRed,
              mark=star,
              mark size=2.9pt]
          table [x=new, y=old, col sep=comma] {result-loop.csv};
          
          \addplot+[draw=black,mark=] coordinates {(0.3,0.3) (300,300)};
          \addplot+[draw=lightgray,mark=] coordinates {(0.3,0.6) (150,300)};
          \addplot+[draw=lightgray,mark=,dotted,thick] coordinates {(0.3,0.9) (100,300)};
          \legend{Loops,Equal time, $+100\%$, $+200\%$}
      \end{loglogaxis}
  \end{tikzpicture}
  \begin{tikzpicture}
    \begin{loglogaxis}[xlabel=Runtime of Brush in sec,ylabel=Runtime of Broom in sec,enlargelimits=false,legend pos=south east]
        \addplot+[
            only marks,
            color=ACMRed,
            mark=star,
            mark size=2.9pt]
        table [x=new, y=old, col sep=comma] {result-branching.csv};
        
        \addplot+[draw=black,mark=] coordinates {(0.1,0.1) (1,1) (150,150)};
        \addplot+[draw=lightgray,mark=] coordinates {(0.1,0.15) (1,1.5) (150,225)};
        \addplot+[draw=lightgray,mark=] coordinates {(0.1,0.05) (1,0.5) (150,75)};
        \legend{Branching,Equal time, $\pm 50\%$}
    \end{loglogaxis}
  \end{tikzpicture}
    \caption{Runtime of Broom and \newname\:in seconds, but only the test cases
      from above with loops or branching.}
    \Description{The scatter plot from above, but split into just the loop cases
      on the left and just the branching cases on the right. The loop cases show
      significant speed-up for Brush, whereas the branching cases are mostly just
      slightly faster or take a similar time.}
    \label{fig:scatter2}
  \end{figure}

Below is the table \Cref{tab:data} containing the full raw experimental data, 
i.e. the exact timings in seconds (rounded to two digits after the decimal point)
for all ten runs for every file.

\begin{landscape}
    \begin{longtable}[H]{|l|c|c|c|c|c|c|c|c|c|c|}
    \caption{Detailed experimental data}\label{tab:data}\\
    \hline
    Filename &  &  &  &  &  &  &  &  &  & \\
    \hline
    \endfirsthead
\textbf{easy-13-ok} &  &  &  &  &  &  &  &  &  & \\
Broom: & 0.26 & 0.26 & 0.26 & 0.26 & 0.26 & 0.26 & 0.26 & 0.26 & 0.26 & 0.27 \\
Brush: & 0.23 & 0.23 & 0.23 & 0.23 & 0.23 & 0.23 & 0.23 & 0.23 & 0.23 & 0.23 \\\hline
\textbf{easy-05c-err} &  &  &  &  &  &  &  &  &  & \\
Broom: & 0.20 & 0.20 & 0.20 & 0.20 & 0.20 & 0.20 & 0.20 & 0.20 & 0.20 & 0.20 \\
Brush: & 0.13 & 0.13 & 0.13 & 0.13 & 0.12 & 0.13 & 0.12 & 0.13 & 0.13 & 0.13 \\\hline
\textbf{easy-15-err} &  &  &  &  &  &  &  &  &  & \\
Broom: & 0.15 & 0.15 & 0.15 & 0.16 & 0.15 & 0.15 & 0.15 & 0.15 & 0.15 & 0.15 \\
Brush: & 0.12 & 0.12 & 0.13 & 0.13 & 0.12 & 0.12 & 0.12 & 0.12 & 0.12 & 0.12 \\\hline
\textbf{easy-05-err} &  &  &  &  &  &  &  &  &  & \\
Broom: & 0.08 & 0.08 & 0.08 & 0.08 & 0.08 & 0.08 & 0.08 & 0.08 & 0.08 & 0.08 \\
Brush: & 0.07 & 0.07 & 0.07 & 0.07 & 0.07 & 0.07 & 0.07 & 0.07 & 0.07 & 0.07 \\\hline
\textbf{easy-08-err} &  &  &  &  &  &  &  &  &  & \\
Broom: & 0.08 & 0.08 & 0.08 & 0.08 & 0.08 & 0.09 & 0.08 & 0.08 & 0.08 & 0.08 \\
Brush: & 0.08 & 0.08 & 0.08 & 0.08 & 0.08 & 0.08 & 0.08 & 0.08 & 0.08 & 0.08 \\\hline
\textbf{test\_intrusive\_single\_file} &  &  &  &  &  &  &  &  &  & \\
Broom: & 87.17 & 85.18 & 95.21 & 90.99 & 89.96 & 89.69 & 90.25 & 88.62 & 84.60 & 84.60 \\
Brush: & 89.39 & 90.75 & 90.56 & 96.03 & 98.47 & 86.97 & 90.60 & 87.70 & 88.64 & 88.15 \\\hline
\textbf{call-01-ok\_gcc} &  &  &  &  &  &  &  &  &  & \\
Broom: & 0.95 & 0.96 & 0.94 & 0.95 & 0.96 & 0.94 & 0.94 & 0.96 & 0.94 & 0.95 \\
Brush: & 0.57 & 0.54 & 0.56 & 0.54 & 0.54 & 0.53 & 0.54 & 0.53 & 0.53 & 0.54 \\\hline
\textbf{memcpy-01c-ok} &  &  &  &  &  &  &  &  &  & \\
Broom: & 8.07 & 7.72 & 8.24 & 8.40 & 8.13 & 8.01 & 8.03 & 7.61 & 7.42 & 7.51 \\
Brush: & 2.27 & 2.13 & 2.32 & 2.30 & 2.40 & 2.29 & 2.31 & 2.44 & 2.13 & 2.14 \\\hline
\textbf{easy-15-ok} &  &  &  &  &  &  &  &  &  & \\
Broom: & 1.87 & 1.85 & 1.94 & 1.81 & 1.84 & 1.84 & 1.85 & 1.87 & 1.85 & 1.87 \\
Brush: & 0.55 & 0.59 & 0.57 & 0.68 & 0.60 & 0.61 & 0.57 & 0.61 & 0.58 & 0.60 \\\hline
\textbf{easy-12-ok\_gcc} &  &  &  &  &  &  &  &  &  & \\
Broom: & 0.12 & 0.12 & 0.12 & 0.12 & 0.12 & 0.12 & 0.12 & 0.12 & 0.12 & 0.12 \\
Brush: & 0.10 & 0.09 & 0.09 & 0.09 & 0.09 & 0.09 & 0.09 & 0.09 & 0.10 & 0.09 \\\hline
\textbf{memcpy-04-ok} &  &  &  &  &  &  &  &  &  & \\
Broom: & 24.58 & 24.71 & 25.55 & 26.14 & 24.04 & 24.44 & 25.19 & 25.15 & 25.05 & 25.45 \\
Brush: & 18.51 & 19.08 & 19.30 & 17.92 & 19.12 & 18.85 & 19.31 & 18.20 & 19.25 & 19.29 \\\hline
\textbf{memcpy-06-ok} &  &  &  &  &  &  &  &  &  & \\
Broom: & 0.45 & 0.43 & 0.45 & 0.50 & 0.44 & 0.43 & 0.44 & 0.42 & 0.42 & 0.43 \\
Brush: & 0.28 & 0.27 & 0.29 & 0.30 & 0.31 & 0.30 & 0.29 & 0.28 & 0.28 & 0.32 \\\hline
\textbf{easy-11-err} &  &  &  &  &  &  &  &  &  & \\
Broom: & 0.10 & 0.10 & 0.10 & 0.10 & 0.10 & 0.10 & 0.10 & 0.10 & 0.10 & 0.10 \\
Brush: & 0.08 & 0.08 & 0.08 & 0.08 & 0.08 & 0.08 & 0.08 & 0.08 & 0.08 & 0.08 \\\hline
\textbf{space-err} &  &  &  &  &  &  &  &  &  & \\
Broom: & 0.32 & 0.32 & 0.32 & 0.32 & 0.32 & 0.32 & 0.32 & 0.32 & 0.32 & 0.32 \\
Brush: & 0.31 & 0.30 & 0.29 & 0.30 & 0.31 & 0.30 & 0.32 & 0.29 & 0.29 & 0.31 \\\hline
\textbf{easy-08-ok} &  &  &  &  &  &  &  &  &  & \\
Broom: & 0.34 & 0.34 & 0.34 & 0.34 & 0.35 & 0.34 & 0.35 & 0.34 & 0.35 & 0.34 \\
Brush: & 0.26 & 0.26 & 0.26 & 0.26 & 0.26 & 0.26 & 0.27 & 0.26 & 0.26 & 0.26 \\\hline
\textbf{sll-shared-sll-after-alloc} &  &  &  &  &  &  &  &  &  & \\
Broom: & 32.38 & 32.40 & 32.39 & 32.76 & 32.65 & 31.73 & 31.77 & 31.89 & 31.92 & 32.31 \\
Brush: & 1.81 & 1.46 & 1.56 & 1.49 & 1.61 & 1.55 & 1.55 & 1.53 & 1.56 & 1.54 \\\hline
\textbf{memcpy-05-err} &  &  &  &  &  &  &  &  &  & \\
Broom: & 0.73 & 0.74 & 0.73 & 0.82 & 0.71 & 0.73 & 0.71 & 0.72 & 0.73 & 0.71 \\
Brush: & 0.71 & 0.65 & 0.67 & 0.68 & 0.65 & 0.66 & 0.64 & 0.67 & 0.63 & 0.64 \\\hline
\textbf{global-mem-leaks-err} &  &  &  &  &  &  &  &  &  & \\
Broom: & 0.43 & 0.42 & 0.43 & 0.41 & 0.41 & 0.43 & 0.41 & 0.46 & 0.41 & 0.43 \\
Brush: & 0.35 & 0.35 & 0.35 & 0.35 & 0.34 & 0.35 & 0.34 & 0.34 & 0.36 & 0.37 \\\hline
\textbf{easy-07-err} &  &  &  &  &  &  &  &  &  & \\
Broom: & 0.15 & 0.15 & 0.15 & 0.15 & 0.15 & 0.15 & 0.15 & 0.15 & 0.15 & 0.15 \\
Brush: & 0.07 & 0.07 & 0.08 & 0.07 & 0.07 & 0.07 & 0.07 & 0.07 & 0.07 & 0.07 \\\hline
\textbf{easy-01-err} &  &  &  &  &  &  &  &  &  & \\
Broom: & 0.04 & 0.04 & 0.04 & 0.04 & 0.04 & 0.04 & 0.04 & 0.04 & 0.04 & 0.04 \\
Brush: & 0.03 & 0.03 & 0.03 & 0.03 & 0.03 & 0.03 & 0.03 & 0.03 & 0.03 & 0.03 \\\hline
\textbf{sll-shared-sll-after} &  &  &  &  &  &  &  &  &  & \\
Broom: & 46.07 & 44.40 & 46.27 & 52.70 & 45.52 & 45.38 & 45.20 & 44.97 & 45.87 & 44.89 \\
Brush: & 2.49 & 2.58 & 2.43 & 2.73 & 2.49 & 2.48 & 2.49 & 2.47 & 2.51 & 2.47 \\\hline
\textbf{easy-03-ok} &  &  &  &  &  &  &  &  &  & \\
Broom: & 0.08 & 0.08 & 0.08 & 0.08 & 0.08 & 0.08 & 0.08 & 0.08 & 0.08 & 0.08 \\
Brush: & 0.07 & 0.07 & 0.07 & 0.07 & 0.07 & 0.07 & 0.07 & 0.07 & 0.07 & 0.07 \\\hline
\textbf{sll-fst-shared} &  &  &  &  &  &  &  &  &  & \\
Broom: & 44.62 & 44.28 & 44.44 & 49.99 & 48.28 & 44.52 & 44.72 & 48.97 & 44.47 & 44.49 \\
Brush: & 3.96 & 3.93 & 3.95 & 4.19 & 4.00 & 3.96 & 4.04 & 4.00 & 4.05 & 3.96 \\\hline
\textbf{circ\_dll\_simple-err} &  &  &  &  &  &  &  &  &  & \\
Broom: & 3.02 & 3.05 & 3.06 & 3.03 & 3.07 & 3.04 & 3.06 & 3.05 & 3.06 & 3.07 \\
Brush: & 2.28 & 2.26 & 2.26 & 2.27 & 2.27 & 2.28 & 2.29 & 2.28 & 2.26 & 2.28 \\\hline
\textbf{easy-04-err} &  &  &  &  &  &  &  &  &  & \\
Broom: & 0.08 & 0.08 & 0.08 & 0.08 & 0.08 & 0.08 & 0.08 & 0.08 & 0.08 & 0.08 \\
Brush: & 0.07 & 0.07 & 0.07 & 0.07 & 0.07 & 0.07 & 0.07 & 0.07 & 0.07 & 0.07 \\\hline
\textbf{easy-09-err} &  &  &  &  &  &  &  &  &  & \\
Broom: & 0.10 & 0.10 & 0.09 & 0.10 & 0.10 & 0.10 & 0.10 & 0.10 & 0.10 & 0.10 \\
Brush: & 0.08 & 0.08 & 0.08 & 0.08 & 0.08 & 0.08 & 0.08 & 0.08 & 0.08 & 0.08 \\\hline
\textbf{memcpy-07-ok} &  &  &  &  &  &  &  &  &  & \\
Broom: & 0.64 & 0.65 & 0.65 & 0.72 & 0.69 & 0.68 & 0.67 & 0.63 & 0.63 & 0.63 \\
Brush: & 0.56 & 0.56 & 0.62 & 0.56 & 0.59 & 0.59 & 0.67 & 0.56 & 0.54 & 0.58 \\\hline
\textbf{no-field-ok} &  &  &  &  &  &  &  &  &  & \\
Broom: & 1.45 & 1.45 & 1.42 & 1.44 & 1.45 & 1.45 & 1.44 & 1.44 & 1.44 & 1.44 \\
Brush: & 1.22 & 1.25 & 1.24 & 1.24 & 1.25 & 1.23 & 1.24 & 1.24 & 1.23 & 1.23 \\\hline
\textbf{easy-08b-ok} &  &  &  &  &  &  &  &  &  & \\
Broom: & 0.34 & 0.35 & 0.35 & 0.34 & 0.34 & 0.34 & 0.34 & 0.34 & 0.34 & 0.35 \\
Brush: & 0.26 & 0.26 & 0.26 & 0.26 & 0.27 & 0.26 & 0.26 & 0.26 & 0.26 & 0.26 \\\hline
\textbf{sll-fst-shared-alloc} &  &  &  &  &  &  &  &  &  & \\
Broom: & 36.70 & 36.56 & 36.60 & 36.63 & 36.65 & 36.67 & 36.71 & 36.64 & 36.63 & 36.64 \\
Brush: & 1.08 & 1.06 & 1.07 & 1.07 & 1.07 & 1.09 & 1.08 & 1.08 & 1.08 & 1.07 \\\hline
\textbf{sll-alloc} &  &  &  &  &  &  &  &  &  & \\
Broom: & 2.75 & 2.60 & 2.67 & 2.68 & 2.73 & 2.72 & 2.67 & 2.69 & 2.61 & 2.93 \\
Brush: & 0.61 & 0.57 & 0.57 & 0.63 & 0.59 & 0.58 & 0.60 & 0.64 & 0.57 & 0.61 \\\hline
\textbf{sll-fst-shared-iter} &  &  &  &  &  &  &  &  &  & \\
Broom: & 18.76 & 18.57 & 18.72 & 18.95 & 18.87 & 18.81 & 18.89 & 18.76 & 18.60 & 18.81 \\
Brush: & 2.62 & 2.55 & 2.59 & 2.62 & 2.52 & 2.53 & 2.58 & 2.58 & 2.56 & 2.52 \\\hline
\textbf{copy\_alloc} &  &  &  &  &  &  &  &  &  & \\
Broom: & 5.32 & 5.22 & 5.22 & 5.39 & 5.40 & 5.13 & 5.22 & 5.15 & 5.20 & 5.17 \\
Brush: & 0.83 & 0.86 & 0.88 & 0.88 & 0.91 & 0.86 & 0.86 & 0.85 & 0.86 & 0.86 \\\hline
\textbf{sll} &  &  &  &  &  &  &  &  &  & \\
Broom: & 35.25 & 34.92 & 35.40 & 35.01 & 34.65 & 34.96 & 34.59 & 34.65 & 34.59 & 34.82 \\
Brush: & 1.93 & 2.12 & 1.95 & 2.11 & 1.96 & 1.92 & 1.97 & 1.98 & 1.95 & 1.93 \\\hline
\textbf{easy-10-err} &  &  &  &  &  &  &  &  &  & \\
Broom: & 0.24 & 0.24 & 0.24 & 0.24 & 0.24 & 0.24 & 0.24 & 0.24 & 0.24 & 0.24 \\
Brush: & 0.19 & 0.19 & 0.19 & 0.19 & 0.19 & 0.19 & 0.19 & 0.19 & 0.19 & 0.19 \\\hline
\textbf{reversal} &  &  &  &  &  &  &  &  &  & \\
Broom: & 11.80 & 11.70 & 11.66 & 11.85 & 11.78 & 11.79 & 11.69 & 11.64 & 11.75 & 11.67 \\
Brush: & 2.03 & 2.01 & 2.01 & 2.02 & 2.01 & 2.06 & 2.06 & 2.09 & 2.08 & 2.04 \\\hline
\textbf{test-junk-ok} &  &  &  &  &  &  &  &  &  & \\
Broom: & 1.99 & 1.95 & 1.99 & 1.98 & 1.97 & 1.98 & 1.98 & 1.98 & 2.00 & 1.98 \\
Brush: & 1.01 & 1.01 & 1.02 & 1.00 & 1.02 & 1.04 & 1.02 & 1.03 & 1.01 & 1.02 \\\hline
\textbf{easy-02-err} &  &  &  &  &  &  &  &  &  & \\
Broom: & 0.10 & 0.10 & 0.10 & 0.10 & 0.10 & 0.10 & 0.10 & 0.10 & 0.10 & 0.10 \\
Brush: & 0.03 & 0.03 & 0.03 & 0.03 & 0.03 & 0.03 & 0.03 & 0.03 & 0.03 & 0.03 \\\hline
\textbf{easy-01-ok} &  &  &  &  &  &  &  &  &  & \\
Broom: & 0.09 & 0.09 & 0.09 & 0.09 & 0.09 & 0.10 & 0.09 & 0.09 & 0.09 & 0.09 \\
Brush: & 0.07 & 0.07 & 0.07 & 0.07 & 0.07 & 0.07 & 0.07 & 0.07 & 0.07 & 0.07 \\\hline
\textbf{memcpy-07-err} &  &  &  &  &  &  &  &  &  & \\
Broom: & 1.05 & 0.99 & 0.99 & 0.98 & 0.99 & 0.99 & 0.99 & 0.99 & 0.99 & 1.16 \\
Brush: & 1.44 & 1.47 & 1.43 & 1.41 & 1.41 & 1.44 & 1.43 & 1.39 & 1.42 & 1.45 \\\hline
\textbf{easy-05b-err} &  &  &  &  &  &  &  &  &  & \\
Broom: & 0.54 & 0.54 & 0.54 & 0.54 & 0.54 & 0.54 & 0.54 & 0.54 & 0.54 & 0.54 \\
Brush: & 0.22 & 0.22 & 0.22 & 0.22 & 0.22 & 0.22 & 0.22 & 0.23 & 0.22 & 0.22 \\\hline
\textbf{global-rerun-ok} &  &  &  &  &  &  &  &  &  & \\
Broom: & 0.40 & 0.41 & 0.40 & 0.40 & 0.40 & 0.40 & 0.40 & 0.40 & 0.40 & 0.40 \\
Brush: & 0.15 & 0.15 & 0.15 & 0.15 & 0.15 & 0.15 & 0.15 & 0.16 & 0.15 & 0.15 \\\hline
\textbf{intrusive-list} &  &  &  &  &  &  &  &  &  & \\
Broom: & 60.98 & 60.32 & 60.12 & 60.79 & 60.33 & 62.06 & 60.38 & 60.43 & 60.23 & 64.90 \\
Brush: & 64.72 & 67.74 & 68.57 & 67.98 & 67.58 & 68.21 & 68.13 & 69.50 & 68.35 & 68.28 \\\hline
\textbf{return-struct-ok} &  &  &  &  &  &  &  &  &  & \\
Broom: & 0.85 & 0.85 & 0.85 & 0.83 & 0.83 & 0.83 & 0.86 & 0.86 & 0.87 & 0.86 \\
Brush: & 0.67 & 0.67 & 0.67 & 0.68 & 0.67 & 0.68 & 0.68 & 0.72 & 0.69 & 0.68 \\\hline
\textbf{easy-06-err} &  &  &  &  &  &  &  &  &  & \\
Broom: & 0.11 & 0.11 & 0.11 & 0.11 & 0.10 & 0.10 & 0.11 & 0.10 & 0.11 & 0.11 \\
Brush: & 0.08 & 0.08 & 0.08 & 0.08 & 0.08 & 0.08 & 0.08 & 0.08 & 0.08 & 0.08 \\\hline
\textbf{linux-list-t2} &  &  &  &  &  &  &  &  &  & \\
Broom: & 27.73 & 29.64 & 27.50 & 27.99 & 27.50 & 27.51 & 27.65 & 27.44 & 27.63 & 27.52 \\
Brush: & 15.24 & 14.35 & 14.30 & 14.32 & 14.52 & 14.42 & 14.38 & 14.42 & 14.37 & 14.39 \\\hline
\textbf{linux-list} &  &  &  &  &  &  &  &  &  & \\
Broom: & 92.64 & 92.44 & 92.58 & 92.84 & 91.85 & 92.53 & 92.48 & 97.33 & 93.66 & 102.85 \\
Brush: & 96.46 & 94.13 & 94.83 & 94.58 & 93.96 & 94.83 & 93.37 & 94.96 & 96.17 & 106.62 \\\hline
\textbf{easy-01b-ok} &  &  &  &  &  &  &  &  &  & \\
Broom: & 0.37 & 0.37 & 0.37 & 0.37 & 0.37 & 0.37 & 0.36 & 0.37 & 0.37 & 0.37 \\
Brush: & 0.15 & 0.16 & 0.15 & 0.15 & 0.15 & 0.16 & 0.16 & 0.15 & 0.15 & 0.15 \\\hline
\textbf{memcpy-03-ok} &  &  &  &  &  &  &  &  &  & \\
Broom: & 10.42 & 11.28 & 10.37 & 9.76 & 9.59 & 9.68 & 9.61 & 9.60 & 9.61 & 9.62 \\
Brush: & 7.70 & 7.62 & 7.93 & 7.10 & 6.87 & 7.06 & 7.00 & 6.96 & 6.92 & 6.83 \\\hline
\textbf{memcpy-01-ok} &  &  &  &  &  &  &  &  &  & \\
Broom: & 2.98 & 2.88 & 2.99 & 2.86 & 2.99 & 2.89 & 2.95 & 2.90 & 2.90 & 2.90 \\
Brush: & 1.95 & 2.02 & 2.02 & 1.95 & 1.95 & 2.01 & 2.00 & 2.01 & 1.95 & 1.92 \\\hline
\textbf{memcpy-01b-ok} &  &  &  &  &  &  &  &  &  & \\
Broom: & 1.53 & 1.61 & 1.61 & 1.78 & 1.71 & 1.72 & 1.74 & 1.66 & 1.67 & 1.65 \\
Brush: & 1.02 & 1.03 & 1.03 & 1.24 & 1.12 & 1.08 & 1.13 & 1.12 & 1.10 & 1.03 \\\hline
\textbf{easy-10b-err} &  &  &  &  &  &  &  &  &  & \\
Broom: & 0.23 & 0.23 & 0.23 & 0.22 & 0.22 & 0.23 & 0.22 & 0.23 & 0.23 & 0.22 \\
Brush: & 0.18 & 0.17 & 0.18 & 0.18 & 0.17 & 0.17 & 0.18 & 0.18 & 0.18 & 0.18 \\\hline
\textbf{circ\_dll\_simple} &  &  &  &  &  &  &  &  &  & \\
Broom: & 2.93 & 2.94 & 2.92 & 2.94 & 2.90 & 2.94 & 2.88 & 2.92 & 2.94 & 2.90 \\
Brush: & 2.09 & 2.08 & 2.11 & 2.11 & 2.09 & 2.07 & 2.10 & 2.08 & 2.07 & 2.02 \\\hline
\textbf{easy-16-err\_gcc} &  &  &  &  &  &  &  &  &  & \\
Broom: & 0.27 & 0.27 & 0.28 & 0.27 & 0.27 & 0.26 & 0.26 & 0.27 & 0.26 & 0.26 \\
Brush: & 0.23 & 0.22 & 0.22 & 0.22 & 0.22 & 0.22 & 0.22 & 0.22 & 0.22 & 0.22 \\\hline
\textbf{easy-16b-err\_gcc} &  &  &  &  &  &  &  &  &  & \\
Broom: & 1.06 & 1.06 & 1.03 & 1.06 & 1.06 & 1.03 & 1.03 & 1.05 & 1.06 & 1.09 \\
Brush: & 0.92 & 0.93 & 0.91 & 0.89 & 0.89 & 0.90 & 0.91 & 0.89 & 0.91 & 0.92 \\\hline
\textbf{memcpy-02-err} &  &  &  &  &  &  &  &  &  & \\
Broom: & 0.93 & 0.93 & 0.94 & 0.93 & 0.91 & 0.92 & 0.94 & 0.94 & 0.94 & 0.93 \\
Brush: & 0.80 & 0.85 & 0.83 & 0.83 & 0.85 & 0.80 & 0.81 & 0.84 & 0.83 & 0.84 \\\hline
\textbf{global\_var\_move} &  &  &  &  &  &  &  &  &  & \\
Broom: & 0.37 & 0.38 & 0.38 & 0.38 & 0.38 & 0.38 & 0.38 & 0.38 & 0.39 & 0.38 \\
Brush: & 0.29 & 0.29 & 0.30 & 0.30 & 0.30 & 0.29 & 0.29 & 0.29 & 0.29 & 0.29 \\\hline
\textbf{easy-04b-err} &  &  &  &  &  &  &  &  &  & \\
Broom: & 0.29 & 0.29 & 0.29 & 0.29 & 0.29 & 0.29 & 0.29 & 0.29 & 0.29 & 0.29 \\
Brush: & 0.22 & 0.22 & 0.22 & 0.22 & 0.22 & 0.22 & 0.22 & 0.22 & 0.22 & 0.22 \\\hline
\textbf{easy-01b-err} &  &  &  &  &  &  &  &  &  & \\
Broom: & 0.16 & 0.16 & 0.16 & 0.16 & 0.16 & 0.16 & 0.17 & 0.17 & 0.16 & 0.16 \\
Brush: & 0.14 & 0.14 & 0.14 & 0.14 & 0.14 & 0.14 & 0.14 & 0.14 & 0.14 & 0.14 \\\hline
\textbf{memcpy-03b-ok} &  &  &  &  &  &  &  &  &  & \\
Broom: & 18.72 & 19.09 & 19.84 & 18.99 & 18.75 & 18.78 & 19.83 & 19.88 & 20.56 & 20.82 \\
Brush: & 7.54 & 7.59 & 7.67 & 7.68 & 7.69 & 7.62 & 7.49 & 7.75 & 7.75 & 8.05 \\\hline
\textbf{predator-test-0156-no-include} &  &  &  &  &  &  &  &  &  & \\
Broom: & 58.11 & 59.16 & 57.98 & 58.07 & 57.89 & 58.56 & 59.03 & 59.46 & 58.94 & 59.06 \\
Brush: & 78.98 & 96.02 & 61.78 & 76.48 & 74.36 & 61.85 & 64.84 & 71.67 & 65.46 & 73.05 \\\hline
\textbf{dll-as-sll-traverse} &  &  &  &  &  &  &  &  &  & \\
Broom: & 2.03 & 2.11 & 2.10 & 2.05 & 2.15 & 2.15 & 2.09 & 2.14 & 2.25 & 2.11 \\
Brush: & 0.39 & 0.41 & 0.38 & 0.38 & 0.41 & 0.39 & 0.39 & 0.40 & 0.42 & 0.39 \\\hline
\textbf{memcpy-04b-ok} &  &  &  &  &  &  &  &  &  & \\
Broom: & 47.19 & 48.48 & 47.38 & 46.77 & 46.60 & 50.31 & 51.59 & 52.33 & 48.82 & 50.86 \\
Brush: & 19.03 & 20.19 & 20.34 & 19.32 & 19.14 & 19.60 & 20.79 & 20.76 & 19.92 & 19.37 \\\hline
\textbf{nested\_lists2} &  &  &  &  &  &  &  &  &  & \\
Broom: & 168.53 & 188.72 & 189.03 & 189.91 & 188.95 & 188.94 & 187.73 & 187.89 & 163.84 & 187.04 \\
Brush: & 2.70 & 2.73 & 2.69 & 2.78 & 2.76 & 2.74 & 2.75 & 2.74 & 2.71 & 2.70 \\\hline
\textbf{circ\_dll\_embeded\_int} &  &  &  &  &  &  &  &  &  & \\
Broom: & 4.87 & 4.89 & 4.92 & 4.82 & 4.87 & 4.83 & 4.81 & 4.80 & 4.88 & 4.83 \\
Brush: & 5.12 & 5.13 & 5.12 & 5.15 & 5.09 & 5.06 & 5.09 & 5.08 & 5.06 & 5.04 \\\hline
\textbf{easy-14-ok} &  &  &  &  &  &  &  &  &  & \\
Broom: & 0.48 & 0.48 & 0.48 & 0.48 & 0.48 & 0.48 & 0.49 & 0.49 & 0.48 & 0.49 \\
Brush: & 0.39 & 0.39 & 0.39 & 0.39 & 0.38 & 0.39 & 0.39 & 0.39 & 0.39 & 0.39 \\\hline
\textbf{intrusive-list-minimal-example} &  &  &  &  &  &  &  &  &  & \\
Broom: & 38.19 & 41.89 & 38.25 & 38.51 & 38.35 & 38.16 & 38.75 & 38.73 & 38.15 & 40.46 \\
Brush: & 36.05 & 36.23 & 35.96 & 35.88 & 36.02 & 36.62 & 36.30 & 36.41 & 36.08 & 36.11 \\\hline
\textbf{linux-list-t2-err} &  &  &  &  &  &  &  &  &  & \\
Broom: & 19.90 & 18.49 & 18.38 & 18.54 & 20.07 & 18.91 & 18.79 & 18.81 & 18.74 & 18.81 \\
Brush: & 12.63 & 12.91 & 12.57 & 12.55 & 12.87 & 12.77 & 12.78 & 12.75 & 12.82 & 12.79 \\\hline
\end{longtable}

\end{landscape}

\section{Further Examples}\label{app:examples}

\subsection{About unsound Loop Acceleration}\label{subsec:forwardcomp}

First, we want to briefly explore how our biabductive loop acceleration approach
improves on existing loop acceleration approaches in the case of unsound 
abstract results.
We recall that in \citep{biabd_conference,biabd} loops were accelerated
by lifting a direct application of abstraction as found in standard forward 
analyses to the biabductive setting.

\begin{lstlisting}[language=C,label=lst:unsound,caption={Two-step list traversal}]
  void two_steps(node *x) {
    while (x != NULL) {
      x = x->next;
      x = x->next;
    } 
  }
\end{lstlisting}

As described for \Cref{lst:mot2}, this loop acceleration technique
consists of four steps:
$(1)$ The loop body is analyzed for the first time.
$(2)$ When the analysis reaches the loop header again, the analysis transforms
the analysis state $(P,Q)$ via abstraction to $(\abs(P),\abs(Q))$, thus 
abstracting the already traversed part of the data structure in the pre- and
post-condition, respectively.
$(3)$ If the state has been computed before, the analysis has found a fixed
point and continues after the loop.
Otherwise, continue with Step $(1)$.

\begin{example}
    For \Cref{lst:unsound}, Step $(1)$ leads to abducing the partial formula
    $X.\texttt{next}\mapsto\ell_1*\ell_1.\texttt{next}\mapsto\ell_2$.
    The state-of-the-art abstraction procedures of Broom or Abductor
    combine the two points-to predicates into one list segment
    $\ls(X,\ell_2)$ in Step $(2)$.
    This list segment specifically represents a list of arbitrary length.
    As the analysis has found a new state, Step $(3)$ will lead to a second
    iteration with Step $(1)$ resulting in $\ls(X,\ell_2)*\ell_2.\texttt{next}
    \mapsto\ell_3*\ell_3.\texttt{next} \mapsto\ell_4$, which is abstracted into
    $\ls(X,\ell_4)$ in another Step $(2)$.
    The resulting list segment is equivalent to the old $\ls(X,\ell_2)$
    up to renaming the logical variables.
    Thus, Step $(3)$ finishes the analysis of the loop with the fixed point
    state $\left(x=X\formsep \ls(X,\Null)\statesep x=\Null\formsep \ls(X,\Null)\right)$.
    The list in the pre-condition has been over-approximated too much and
    unsafely includes lists of odd length and will thus be filtered out in the 
    second phase.
    
    Similarly, biabductive loop acceleration with shape extrapolation as described 
    above will not be able to find a sound contract, since our heuristic is
    based on the same abstraction.
    However, it does improve the status quo by failing early, i.e., by 
    constructing a candidate invariant with the list segment of arbitrary length
    which is then immediately found to be unsound.
    Moreover, any extension to the underlying logic or abstraction procedure that
    allows to find a sound over-approximation would still be faster in our new 
    framework.
\end{example}

\subsection{List Reversal}\label{app:examples:listrev}
\begin{center}
    \begin{lstlisting}[language=C,label=lst:app:example2,caption={An in-place list reversal algorithm},basicstyle=\ttfamily\small]
    node *reverse_list(node *x) {
        node *r = NULL;

        while (x != NULL) {
            node *next_r = x;
            x = x->next;
            next_r->next = r;
            r=next_r;
        }

        return r;
    }
\end{lstlisting}
\end{center}
This example shows a simple in-place list reversal for singly-linked lists as it is frequently studied in the literature (e.g., \citep{otherabstraction}).
The state after one iteration of the loop is akin to the following:

\[
    s_1\mathdef\ \left(X.\mathit{next}\mapsto\ell_1\statesep x=\ell_1\land r=X\formsep  X.\mathit{next}\mapsto R\right)
\]

From this, it is apparent that the set of anchors $\{X,R\}$ is not distinct from the image of \textit{TransfM}, due to $\mathit{TransfM}(r)=L$.
Thus, as described in \Cref{subsec:shape-extrapol-more}, the analysis needs to do another loop iteration before we can extrapolate the shape.
After the second iteration, the state would be the following:
\[
    s_1'\mathdef\ \left(X.\mathit{next}\mapsto\ell_1*\ell_1.\mathit{next}\mapsto\ell_2\statesep x=\ell_2\land r=\ell_1\formsep X.\mathit{next}\mapsto R*\ell_1.\mathit{next}\mapsto X\right)
\]

Now, $\mathit{TransfM}(r)=\ell_1$ and $\mathit{TransfM}(x)=\ell_2$, thus having no overlap with the anchors anymore.
As a result, the state can be partitioned such that $\mathit{Changed}\mathdef\{x,r\}$.
As a result, the extrapolated shapes are $p(X,R,x,r)\mathdef \ls(X,x)$ and $q(X,R,x,r)\mathdef \ls(r,R)$ and the following candidate invariant state is computed:
\[
    s_2\mathdef\left(\ls(X,\ell_2)*\ls(\ell_2,\Null)\statesep x=\ell_2\land r=\ell_1\formsep \ls(\ell_1,R)*\ls(\ell_2,\Null)\right)
\]

With the additional assumption of $\ell_2\neq\Null$, the second iteration succeeds in the following state:
\begin{align*}
    s_3.\mathit{pre}\mathdef\   & \ls(X,\ell_2)*\ell_2.\mathit{next}\mapsto\ell_3*\ls(\ell_3,\Null)                                \\
    s_3.\mathit{curr}\mathdef\  & x=\ell_3\land r=\ell_2\formsep \ls(\ell_1,R)*\ell_2.\mathit{next}\mapsto\ell_1*\ls(\ell_3,\Null)
\end{align*}

Because $s_3\vdash s_2$, we find that $s_2$ is an actual loop invariant and can finally compute the loop contract:
\[
    s_\mathit{final}\mathdef\left(\ls(X,\Null)\statesep x=\Null\land r=\ell_4\formsep \ls(\ell_4,R)\right)
\]

\subsection{De-/Allocation}
In the following, we explore how our shape extrapolation algorithm handles de-allocation and allocation at the same time.

\begin{center}
    \begin{lstlisting}[language=C,label=lst:app:example3,caption={The function moves and reverses a list by allocating new nodes and de-allocating old ones.},basicstyle=\ttfamily\small]
    node *move_to_rev(node *old) {
        node new = NULL;

        while (old != NULL) {
            node *to_free = old;
            node *new_node = alloc(next,data);

            new_node->data = old->data;
            new_node->next = new;

            new = new_node;

            old = old->next;
            free(to_free);
        }

        return new;
    }\end{lstlisting}
\end{center}

This example shows an artificial combined use-case of de-allocation and allocation.
A given input list is traversed, a copy is allocated for each node and appended to the output list and finally the input node is freed.
In addition, the order of the list is reversed.
The state after the first iteration is the following:
\begin{align*}
    s_1.\mathit{pre}\mathdef  & \ \mathit{OLD}.\mathit{next}\mapsto\ell_1*\mathit{OLD}.\mathit{data}\mapsto\ell_2                                                \\
    s_1.\mathit{curr}\mathdef & \ \mathit{old}=\ell_1\land \mathit{new}=\ell_4\formsep \ell_4.\mathit{next}\mapsto\mathit{NEW}*\ell_4.\mathit{data}\mapsto\ell_2
\end{align*}

From this state, the following shape is extrapolated as $p(\mathit{OLD}, \mathit{NEW}, \mathit{old}, \mathit{new}) \mathdef \ls(\mathit{OLD}, \mathit{old})$ and $q(\mathit{OLD}, \mathit{NEW}, \mathit{old}, \mathit{new}) \mathdef \ls(\mathit{new},\Null)$ with $\noshapeeffect_\mathit{pre}=\noshapeeffect_\mathit{curr}$ being empty.
Together, they form the invariant state:
\[
    s_\mathit{inv}\mathdef  \left(\ls(\mathit{OLD},\ell_1)*\ls(\ell_1,\Null)\statesep \mathit{old}=\ell_1\land \mathit{new}=\ell_4\formsep \ls(\ell_4,\mathit{NEW})*\ls(\ell_1,\Null)\right)
\]

Ultimately, the final state is:
\[
    s_\mathit{final}\mathdef \left(\ls(\mathit{OLD},\Null)\statesep \mathit{old}=\Null\land \mathit{new}=\ell_5\formsep \ls(\ell_5,\mathit{NEW})\right)
\]

\subsection{Invariant Value Abstraction}\label{app:examples:inv}
As described in \Cref{subsec:shape-extrapol-more}, the constructed candidate
invariant might describe shapes that guarantee memory safety, but might still
not be a loop invariant due to value mismatches.
The following example shows a problem where this problem occurs:

\begin{center}
    \begin{lstlisting}[language=C,label=lst:app:example4,caption={A function
        that requires non-trivial value invariants},basicstyle=\ttfamily\small]
    bool traverse_with_flag(node *list, bool flag){
        while(list) {
            list=list->next;
            flag=true;
        }

        return flag;
    }\end{lstlisting}
\end{center}

There, the value of \texttt{flag} before the loop (which is unknown) does
not entail the one in the invariant, since that value is obtained from the state after one iteration (i.e., \texttt{true}).
As we do not want such value incompatibilities to lead to disjunctive post-conditions of the contract, we instead simply abstract the value of \texttt{flag} in $s_\mathit{inv}$ to $\top$ (which, in this case, is equivalent to doing a join of the two values).
\end{document}